\newtheorem{theorem}{Theorem}[section]
\newtheorem{corollary}[theorem]{Corollary}
\newtheorem{lemma}{Lemma}[section]
\theoremstyle{remark}
\newcommand{\e}{\mathrm{e}}
\newcommand{\imag}{\mathrm{i}}
\newcommand{\gtr}{>}
\newcommand{\less}{<}
\newcommand{\abs}[1]{\lvert #1 \rvert}
\newcommand{\bs}{\boldsymbol}
\newcommand{\tN}{\tilde{N}}
\newcommand{\tM}{\tilde{M}}
\newcommand{\tx}{\tilde{x}}
\newcommand{\ty}{\tilde{y}}
\newcommand{\Z}{{\mathbb Z}}
\newcommand{\C}{{\mathbb C}}
\newcommand{\R}{{\mathbb R}}
\newcommand{\cA}{{\mathcal A}}
\newcommand{\cE}{{\mathcal E}}
\newcommand{\cH}{{\mathcal H}}
\newcommand{\cN}{{\mathcal{N}}}
\newcommand{\cV}{{\mathcal{V}}}
\newcommand{\half}{\mbox{$\frac{1}{2}$}}
\newcommand{\quarter}{\mbox{$\frac{1}{4}$}}
\newcommand{\ve}{\varepsilon}
\begin{document}

\title[Source identities for the deformed van Diejen model]{Source identities and kernel functions for the deformed Koornwinder-van Diejen models}

\author{Farrokh Atai}
\address{Department of Mathematics, Kobe University, Rokko, Kobe 657-8501, Japan}
\email{farrokh@math.kobe-u.ac.jp}

\date{\today}

\begin{abstract}
We consider generalizations of the $BC$-type relativistic Calogero-Moser-Sutherland models, comprising of the rational, trigonometric, hyperbolic, and elliptic cases, due to Koornwinder and van Diejen, and construct an explicit eigenfunction for these generalizations. In special cases, we find the various kernel function identities, and also a Chalykh-Feigin-Sergeev-Veselov type deformation of these operators and their corresponding kernel functions, which generalize the known kernel functions for the Koornwinder-van Diejen models.
\newline
\begin{flushleft}
	\textbf{Keywords:} \scriptsize Exactly solvable models; Koornwinder-van Diejen models; Chalykh-Feigin-Veselov-Sergeev type deformation; Kernel functions
\end{flushleft}
\end{abstract}

\maketitle

\section{Introduction and main results}\label{section_intro}
A useful tool in the study of special functions related to quantum models of Calogero-Moser-Sutherland (CMS) type \cite{Cal71,Sut72,OP77} are the so-called \emph{kernel functions}. 
For the elliptic CMS type models \cite{OP77,LT12}, the kernel functions are of particular use for constructing different representations of the corresponding special functions \cite{langmann:7,LT12,AL18} and for studying their properties, for example spectral symmetries \cite{ruijsenaars2009}. 
The kernel functions have also proven to be an invaluable tool for studying the relativistic generalizations of the elliptic CMS models due to Ruijsenaars \cite{Rui87} and van Diejen \cite{vDi94}. It is the latter model which we consider in this paper.
The $N$-variable van Diejen model \cite{vDi94} with 9 (real) coupling parameters $g_{0},\ldots, g_{7}, \lambda \in \R$, elliptic modulus $\tau$ ($\Im(\tau) \gtr 0$), and ``relativistic deformation'' parameter $\beta \gtr 0$, is formally defined by the analytic difference operator\footnote{Throughout the paper, we write $A_{N}(\bs x ; \bs g , \lambda ,\beta)$, etc., to indicate the arguments $\bs x$ and the coupling parameters.}
\begin{equation}\label{eq_van_Diejen_operator}
A_{N}(\bs x ; \bs g , \lambda , \beta ) = \vartheta_{1}(\imag \lambda \beta) \sum_{\varepsilon=\pm} \sum_{j=1}^{N} V^{\varepsilon}_{j}(\bs x ; \bs{g}, \lambda , \beta) \exp\Bigl( - \varepsilon\imag \beta \frac{\partial}{\partial x_{j}} \Bigr) + V^{0}(\bs x ; \bs g , \lambda , \beta),
\end{equation}
where the coefficients are given by\footnote{We use the standard definition for the odd Jacobi theta function $\vartheta_{1}(x) := \vartheta_{1}(x \lvert \tau) 
$ ($\Im(\tau)\gtr 0$) found in \cite{WhitWat}; see also \eqref{eq_Jacobi_theta_function}.}
$$
V^{\pm}_{j} (\bs x ; \bs g , \lambda, \beta ) = \frac{\prod_{\nu=0}^{7} \vartheta_{1}( \pm x_{j} - \imag g_{\nu} \beta) }{\vartheta_{1}( \pm 2x_{j} ) \vartheta_{1}( \pm 2x_{j} - \imag \beta )} \prod_{\delta=\pm}  \prod_{ j^{\prime} \neq j}^{N} \frac{\vartheta_{1}( x_{j} + \delta x_{j^{\prime}} \mp \imag \lambda \beta )}{\vartheta_{1}(x_{j} + \delta x_{j^{\prime}} )} \quad (j=1,\ldots,N)
$$
and $V^{0}(\bs x ; \bs g , \lambda , \beta) $ an even elliptic function with simple poles at $x_{j} = \pm \imag \beta /2$ ($j=1,\ldots,N$) and congruent points w.r.t. half-period shifts. We have chosen to not write down the $V^{0}$ coefficient here and the precise definition is given in \eqref{eq_van_Diejen_potential}. Note that our normalization is different from the standard definition. (Here, and in the following, we often suppress the dependence of functions on the elliptic modulus $\tau$.) It is also known that van Diejen's model admits a commuting family of higher order difference operators \cite{KH97}. The operator \eqref{eq_van_Diejen_operator} generalizes Koornwinder's multivariate extension of the Askey-Wilson difference operator \cite{Koo92} in the ``additive variables'' convention. Koornwinder's operator can also be obtained from \eqref{eq_van_Diejen_operator}, after suitable rescaling of both parameters and the operator, in the trigonometric limit $\Im(\tau)\to\infty$.
 Other limiting cases were also considered by van Diejen \cite{vDi94}.

For our purposes, it will be more convenient to work with the operator in a different form: The operator is known to be symmetric with respect to an explicitly known weight function and can be transformed to a symmetric form by a similarity transformations. The symmetric version of \eqref{eq_van_Diejen_operator} is then given by
\begin{equation}
H_{N}(\bs x ; \bs{g} , \lambda , \beta) = \vartheta_{1}(\imag \lambda \beta) \sum_{\varepsilon=\pm} \sum_{j=1}^{N} V^{\varepsilon}_{j}(\bs x ; \bs{g}, \lambda , \beta)^{1/2} \exp\Bigl( - \varepsilon\imag \beta \frac{\partial}{\partial x_{j}} \Bigr) V^{-\varepsilon}_{j}( \bs x ; \bs{g}, \lambda , \beta)^{1/2} + V^{0}(\bs x ; \bs g , \lambda , \beta).
\label{eq_vD_Hamiltonian}\end{equation}

For the class of CMS models \cite{OP77}, there exists a remarkable functional identity which can be regarded as the source of all kernel function identities: Such \emph{source identities} imply known (groundstate) eigenvalue equations and all kernel function identities as special cases \cite{Sen96,HL10,Lan10,LT12}. 
Remarkably, the source identities can also be used to obtain the same type of identities for a mathematically natural generalization of the CMS models due to Chalykh, Feigin, and Veselov and Sergeev \cite{CFV98,Ser02,SV05,SV09a}, commonly referred to as \emph{deformed CMS models}. In a previous paper, we constructed source identities for the relativistic generalization of the $A_{N-1}$ elliptic CMS model due to Ruijsenaars \cite{Rui87} and obtained a deformed generalizations thereof. One of the main results of this paper is the introduction of a Chalykh-Feigin-Sergeev-Veselov (CFSV) type deformation of the van Diejen operator \eqref{eq_vD_Hamiltonian}. This \emph{deformed van Diejen operator} is given by
\begin{equation}
\begin{split}
H_{N,\tN}(\bs x , \bs\tx ; \bs g , \lambda , \beta) =& \sum_{\ve =\pm} \vartheta_{1}(\imag \lambda \beta)\sum_{j=1}^{N}  V^{\ve}_{j}(\bs x , \bs\tx)^{1/2} \exp\Bigl({- \ve \imag \beta \frac{\partial}{\partial x_{j}}}\Bigr) V^{-\ve}_{j}(\bs x , \bs\tx)^{1/2} \\
- & \vartheta_{1}(\imag \beta) \sum_{k=1}^{\tN} \tilde{V}^{\ve}_{k}(\bs x , \bs\tx)^{1/2} \exp\Bigl({+ \ve \imag \lambda \beta \frac{\partial}{\partial \tx_{k}}}\Bigr) \tilde{V}^{-\ve}_{k}(\bs x , \bs\tx)^{1/2} + V^{0}(\bs x , \bs\tx) 
\label{eq_deformed_van_Diejen_operator}
\end{split}
\end{equation}
with coefficients
\begin{equation*}
\begin{split}
V_{j}^{\pm}(\bs x , \bs\tx) =& \frac{\prod_{\nu=0}^{7} \vartheta_{1}( \pm x_{j} - \imag g_{\nu} \beta)}{\vartheta_{1}(\pm 2 x_{j}) \vartheta_{1}(\pm 2 x_{j} - \imag \beta)}\prod_{\delta=\pm} \prod_{\substack{ j'=1 \\ j' \neq j}}^{N} \frac{\vartheta_{1}(x_{j} + \delta x_{j'} \mp \imag \lambda \beta)}{\vartheta_{1}(x_{j} +\delta x_{j'})} \prod_{k=1}^{\tN} \frac{\vartheta_{1}(x_{j} + \delta \tx_{k} \mp \half \imag ( \lambda - 1 ) \beta)}{\vartheta_{1}(x_{j} + \delta \tx_{k} \mp \half \imag ( \lambda +1) \beta)},
\\
\tilde{V}_{k}^{\pm}(\bs x , \bs\tx) =& \frac{\prod_{\nu=0}^{7} \vartheta_{1}( \pm \tx_{k} - \imag (g_{\nu} - \half (\lambda+1)) \beta)}{\vartheta_{1}(\pm 2 \tx_{k}) \vartheta_{1}(\pm 2 \tx_{k} + \imag \lambda \beta)}\prod_{\delta=\pm} \prod_{ j=1}^{N} \frac{\vartheta_{1}(\tx_{k} + \delta x_{j} \mp \half \imag(\lambda -1) \beta)}{\vartheta_{1}(\tx_{k} +\delta x_{j} \pm \half \imag (\lambda +1 ) \beta)} \prod_{\substack{k'=1 \\ k' \neq k}}^{\tN} \frac{\vartheta_{1}( \tx_{k} + \delta \tx_{k'} \pm \imag  \beta)}{\vartheta_{1}(\tx_{k} + \delta \tx_{k'})},
\end{split}
\end{equation*}
and
$V^{0}(\bs x , \bs\tx)$ given in\footnote{The function $s(x)$ in \eqref{eq_deformed_operator_potential} should then be replaced with the odd Jacobi theta function $\vartheta_{1}(x)$.} \eqref{eq_deformed_operator_potential}.
Limiting cases of this operator was found in the works of Sergeev and Veselov\cite{SV09a}, and Feigin and Silantyev \cite{FS14}.
Furthermore, we construct explicit kernel functions for the deformed van Diejen operator \eqref{eq_deformed_van_Diejen_operator} and all of its limiting cases.

In this paper, we insist on having uniform arguments for all the different cases of Koornwinder-van Diejen type operators and their corresponding kernel functions. To this end, we introduce the function
\begin{equation}
s(x) = \begin{cases}
x \quad &\text{(rational case I)} \\
(1/r) \sin( r x) \quad &\text{(trigonometric case II)} \\
(a /\pi) \sinh( \pi x / a ) \quad &\text{(hyperbolic case III)} \\
(1/r) \e^{ r a / 4} \vartheta_{1}(r x \lvert \imag r a / \pi ) \quad &\text{(elliptic case IV)}
\end{cases}
\label{eq_s_function}
\end{equation}
where $r \gtr 0$ and $a \gtr 0$. While the elliptic case is the most general, in the sense that the other cases can be obtained by suitable limits, it is convenient for us to consider the different cases separately. The main reason for this is that the results for the elliptic case holds only under certain restrictions on the model parameters, the so-called \emph{balancing condition} below, but hold for arbitrary parameters in the other cases. 
In order to have uniform arguments as we proceed to treat these different cases, we need to introduce some further notation:
For each case, there is an additive subgroup $\Omega \subset \C$ consisting of all zeroes of $s(x)$. More specifically, we have that 
$$
\Omega=\bigoplus_{\nu=0}^{\rho} \Z \omega_{\nu} , \quad \text{where} \quad \rho= \begin{cases}0 \quad &\text{(I)} \\
1\quad &\text{(II) and (III)} \\
3 \quad  &\text{(IV)}
\end{cases},
$$
with 
$$
\omega_{0} = 0,\quad  \omega_{1} = \begin{cases} \pi / r \quad &\text{(II) and (IV)}\\
\imag a \quad &\text{(III)}
\end{cases}, \quad \omega_{2} = \imag a , \quad \omega_{3} = - \omega_{1} - \omega_{2}.
$$
The quasi-periodicity of $s(x)$ can also be expressed as \cite{KNS09}
\begin{equation}
s(x + \omega_{\nu}) = \epsilon_{\nu} \e^{ 2 \imag r \xi_{\nu} ( x + \half \omega_{\nu})} s(x)
\label{eq_quasi_periodicity}
\end{equation}
where $(\epsilon_{0},\epsilon_{1},\epsilon_{2},\epsilon_{3})=(1,-1,-1,-1)$ and $(\xi_{0},\xi_{1},\xi_{2},\xi_{3})=(0,0,-1,1)$ \cite{WhitWat}. 
Note that the quasi-periodicity factors are only non-trivial, in the sense that they are not $\pm1$, for the elliptic (IV) case. 

With our notation in place, we can define the following generalization of the Koornwinder-van Diejen type difference operator:
\begin{equation}
\cH(\bs X ; \bs m)= \sum_{\ve=\pm} \sum_{J=1}^{\cN} s(\imag \lambda m_{J} \beta) \cV^{\ve}_{J}(\bs X ; \bs m)^{1/2} \exp\Bigl( - \ve\imag  \frac{\beta}{m_{J}} \frac{\partial}{\partial X_{J}}\Bigr) \cV^{-\ve}_{J}(\bs X ; \bs m)^{1/2} + \cV^{0}(\bs X ; \bs m)
\label{eq_Sen_operator}
\end{equation}
where
\begin{multline}\label{eq_Sen_coeff_potential}
\cV^{0}(\bs X ; \bs m) = - \frac{1}{4} \Bigl( \prod_{\nu=1}^{\rho} s(\half \omega_{\nu}) \Bigr)^{2} \sum_{\nu=0}^{\rho} \frac{\exp(- r \xi_{\nu}( 2 \lambda \sum_{J=1}^{\cN} m_{J} + \sum_{\nu=0}^{2\rho +1} g_{\nu} - \half ( \rho + 1) ( \lambda + 1) ) \beta) }{\prod_{\mu \neq \nu}^{\rho} s(\half( \omega_{\nu} - \omega_{\mu}))} \\
\cdot \left( \frac{\prod_{\mu=0}^{2\rho+1} s( \half \omega_{\nu}+ \half \imag\beta - \imag g_{\mu} \beta)}{\prod_{\mu=0}^{\rho}
s(\half( \omega_{\nu} - \omega_{\mu} + \imag(1 - \lambda ) \beta))}\prod_{\delta=\pm} \prod_{J=1}^{\cN} \frac{s( \delta X_{J} + \half \omega_{\nu} + \quarter \imag [ \lambda(m_{J}-1) + 1/m_{J} + 1] \beta - \imag \lambda m_{J} \beta)}{s( \delta X_{J} + \half \omega_{\nu} + \quarter \imag [  \lambda(m_{J}-1) + 1/m_{J} + 1 ] \beta )}
\right. \\
\left. +
\frac{\prod_{\mu=0}^{2\rho+1} s( \half \omega_{\nu} + \half \imag \lambda \beta - \imag g_{\mu} \beta)}{
\prod_{\mu=0}^{\rho} s(\half( \omega_{\nu} - \omega_{\mu} + \imag ( \lambda - 1) \beta))} \prod_{\delta= \pm} \prod_{K=1}^{\cN}  \frac{s(\delta X_{K} + \half \omega_{\nu} + \quarter \imag [ \lambda(m_{K}+1) + 1/m_{K} - 1]\beta -\imag \lambda m_{K} \beta)}{s(\delta X_{K} + \half \omega_{\nu} + \quarter \imag [ \lambda(m_{K}+1) + 1/m_{K} - 1] \beta)} \right),
\end{multline}
\begin{equation}
\cV^{\pm}_{J}(\bs X ; \bs m) = \frac{\prod_{\nu=0}^{2\rho+1} s( \pm X_{J} - \imag d_{\nu,J} \beta)}{
s( \pm 2 X_{J}) s( \pm 2 X_{J} - \imag \beta / m_{J} )} \prod_{K \neq J}^{\cN} \prod_{\delta=\pm} f_{\pm}(X_{J} + \delta X_{K} ; m_{J}, m_{K})
\label{eq_Sen_coeff_shift}
\end{equation}
with
\begin{equation}
d_{\nu,J} = d( g_{\nu}, m_{J}) , \quad d(g,m) = \begin{cases} g \quad &\text{if } m=1 \text{ or } m = + 1/ \lambda \\
g - (\lambda + 1 ) / 2 \quad &\text{if } m=-1 \text{ or } m=-1/\lambda
\end{cases}
\label{eq_Sen_parameters}
\end{equation}
and
\begin{equation}\label{eq_Sen_coeff_pair}
f_{\pm}( x ; m , m') = \frac{s(x \mp \imag (m - m') ( \lambda m m' - 1 ) \beta /4 m m' \mp \imag \lambda m' \beta)}{s(x \mp \imag (m - m') ( \lambda m m' - 1 ) \beta /4 m m') },
\end{equation}
depending on the ``mass'' parameters $\bs m \in \Lambda^{\cN}$ taking values in the set
$$
\Lambda = \Bigl\{ 1 , -1 , -\frac{1}{\lambda} , + \frac{1}{\lambda} \Bigr\}.
$$

Throughout this paper, we will always assume that $\imag \beta / m$ is not equal to any periods $\omega_{\nu}$ $(\nu=0,\ldots,\rho)$ multiplied by a rational number, that is $(\imag \beta / m) \Z \cap \Omega = \emptyset$, for all $\beta \gtr 0$ and $m \in \Lambda$.

The pertinent eigenfunction for the operator in \eqref{eq_Sen_operator} is then given by
\begin{equation}
\Phi(\bs X ; \bs m) = \prod_{J=1}^{\cN} \psi(X_{J};m_{J}) \prod_{1 \leq J \less K \leq \cN} \prod_{\ve,\ve' = \pm} \phi( \ve X_{J} + \ve' X_{K} ; m_{J},m_{K})
\label{eq_Sen_function}
\end{equation}
where
\begin{equation}\label{eq_Sen_function_single}
\psi(x;m) = 
\left( \frac{G(2 x + \imag \beta/2m ; \beta/m) G(- 2 x + \imag \beta/2m ; \beta/m)}{\prod_{\nu=0}^{2\rho+1}G(x + \imag \beta/2 m - \imag d(g_{\nu},m) \beta; \beta/m)G(-x + \imag \beta/2 m - \imag d(g_{\nu},m) \beta; \beta/m)}\right)^{1/2},
\end{equation}
and
\begin{equation}\label{eq_Sen_function_pair}
\phi(x ; m , m') = \begin{dcases} 
\left( \frac{G(x + \imag \beta/2 m ; \beta / m)}{G(x - \imag \lambda m \beta + \imag \beta/2m; \beta/m)}\right)^{1/2}\quad &\text{if } m'=m \\
G(x - \imag \lambda m \beta / 2; \beta / m) \quad &\text{if } m'=- m\\
s(x)^{1/2} \quad &\text{if } m'=+1/\lambda m\\
s(x - \imag\lambda m \beta /2 + \imag \beta /2 m)^{-1/2} \quad &\text{if } m'=-1/\lambda m
\end{dcases}
\end{equation}
with $G(x;\alpha)$ a function of two variables $x\in\C$ and $\alpha \in \C$ such that $\Re(\alpha)\neq 0$, satisfying the functional equation
\begin{equation}
\frac{G(x + \imag \alpha /2 ; \alpha)}{G(x - \imag \alpha/2; \alpha)} = c \cdot s(x) \quad (c \in \C^{\ast}:=\C\setminus\{0\}).
\label{eq_Gamma_function_equation}
\end{equation}
It is known that functions satisfying the functional equation exists in all cases (I)--(IV)  \cite{WhitWat,Rui97}. For example in the rational case (I), this function can be expressed in terms of the Euler $\Gamma$-function and we refer to the function $G(x;\alpha)$ as the \emph{Gamma function} in all cases. A more detailed, albeit not complete, discussion on the Gamma function is given in Section~\ref{section_gamma_function}.

Before presenting our main result, let us introduce the notation
$$
\abs{\bs m } = \sum_{J=1}^{\cN} m_{J} , \quad \abs{\bs g} = \sum_{\nu=0}^{2 \rho +1 } g_{\nu}, \quad \abs{\bs \omega} = \sum_{\nu=0}^{\rho} \omega_{\nu}
$$
in order to simplify the formulas below. It is worth noting that the sum $\abs{\bs\omega}$ is only non-zero in the trigonometric (II) and hyperbolic (III) cases.

\begin{theorem}\label{thm_source_identity}
Let $\cN\in\Z_{\geq0}$, $\bs X=(X_{1},\ldots,X_{\cN})\in (\C\setminus\Omega)^{\cN}$ with complex variables $X_{J}$ such that $X_{J}\neq \pm X_{K} (\text{\emph{mod} } \Omega)$ $\forall K\neq J$, and $\bs m \in \Lambda^{\cN}$. Then
\begin{equation}
\Bigl( \cH_{\cN}(\bs X ; \bs m) - \bigl[  \frac{\prod_{\nu=1}^{\rho}s(\half \omega_{\nu})}{2} \bigr]^{2} s(  \imag \beta [ 2 \lambda \abs{\bs m} + \abs{\bs g} - \frac{1}{2} (\rho+1)(\lambda+1)] - \abs{\bs\omega}) \Bigr) \Phi(\bs X ; \bs m) = 0
\label{eq_Source_identity}
\end{equation}
holds true in the rational \emph{(I)}, trigonometric \emph{(II)}, and hyperbolic \emph{(III)} cases, while in the elliptic \emph{(IV)} case it holds if and only if the balancing condition 
\begin{equation}
 2 \lambda \sum_{J=1}^{\cN} m_{J} +  \sum_{\nu=0}^{7} g_{\nu} - 2 (\lambda+1) =0 \quad \text{\emph{(IV)}}
\label{eq_balancing_condition}
\end{equation}
is satisfied.
\end{theorem}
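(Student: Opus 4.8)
The plan is to prove \eqref{eq_Source_identity} by showing that the ``eigenvalue function''
$$
\mathcal{E}(\bs X) := \frac{\cH_{\cN}(\bs X ; \bs m)\,\Phi(\bs X ; \bs m)}{\Phi(\bs X ; \bs m)}
$$
is independent of $\bs X$ and equals the constant on the left-hand side of \eqref{eq_Source_identity}. Since \eqref{eq_Sen_operator} is already in symmetric form, applying it to $\Phi$ and dividing by $\Phi$ gives
$$
\mathcal{E}(\bs X) = \cV^{0}(\bs X ; \bs m) + \sum_{\ve=\pm}\sum_{J=1}^{\cN} s(\imag\lambda m_{J}\beta)\,\cV^{\ve}_{J}(\bs X)^{1/2}\,\cV^{-\ve}_{J}(\bs X_{J,\ve})^{1/2}\,\frac{\Phi(\bs X_{J,\ve})}{\Phi(\bs X)},
$$
where $\bs X_{J,\ve}$ denotes $\bs X$ with $X_{J}$ replaced by $X_{J}-\ve\imag\beta/m_{J}$. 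Throughout I would keep the arguments uniform across cases (I)--(IV) by working with the function $s$ and its quasi-periodicity \eqref{eq_quasi_periodicity}, and invoke the case distinction only at the Liouville step.

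The first technical step is to evaluate the shifted ratio $\Phi(\bs X_{J,\ve})/\Phi(\bs X)$. Because $\Phi$ factorizes as in \eqref{eq_Sen_function} into the one-body factors $\psi$ of \eqref{eq_Sen_function_single} and the pair factors $\phi$ of \eqref{eq_Sen_function_pair}, all built from the Gamma function $G$, every such ratio collapses to a finite product of $s$-values via the functional equation \eqref{eq_Gamma_function_equation}; the shift by $\imag\beta/m_{J}$ is matched exactly to the $\imag\alpha/2$ shift there with $\alpha=\beta/m_{J}$. Combining these $s$-products with the explicit coefficients $\cV^{\ve}_{J}$ from \eqref{eq_Sen_coeff_shift}, I expect the square roots to pair up so that each total coefficient
$$
A^{\ve}_{J}(\bs X) := s(\imag\lambda m_{J}\beta)\,\cV^{\ve}_{J}(\bs X)^{1/2}\,\cV^{-\ve}_{J}(\bs X_{J,\ve})^{1/2}\,\frac{\Phi(\bs X_{J,\ve})}{\Phi(\bs X)}
$$
becomes a single-valued meromorphic function of $\bs X$, rational in the $s$-values. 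This reduces the theorem to the function-theoretic statement that $\cV^{0}+\sum_{\ve,J}A^{\ve}_{J}$ is constant.

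The core of the argument is then a Liouville-type analysis in each $X_{J}$ separately, using that $\mathcal{E}$ is manifestly even under $X_{J}\mapsto -X_{J}$ (the $BC$-symmetry of both $\cH_{\cN}$ and $\Phi$). Viewed in a single $X_{J}$, $\mathcal{E}$ is meromorphic with candidate poles only at the zeros of the denominators $s(\pm 2X_{J})\,s(\pm 2X_{J}-\imag\beta/m_{J})$ in \eqref{eq_Sen_coeff_shift}, at the pair points $X_{J}\pm X_{K}\equiv 0\pmod{\Omega}$ coming from $\phi$ and from \eqref{eq_Sen_coeff_shift}, and at the poles of $\cV^{0}$. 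One shows that at each such point the residues of the relevant terms $A^{\ve}_{J}$, $A^{\pm}_{K}$ and $\cV^{0}$ cancel in pairs; this is where a theta-function three-term (Riemann/Weierstrass) addition identity enters, matching the residue of one shift term against that of $\cV^{0}$ or of the opposite shift. Once holomorphy is established I invoke the Liouville argument appropriate to each case: in the rational (I) case $\mathcal{E}$ is a bounded rational function of $X_{J}$; in the trigonometric (II) and hyperbolic (III) cases it is $\omega_{1}$-periodic and bounded in the remaining direction, the sign multipliers $\epsilon_{\nu}=\pm1$ of \eqref{eq_quasi_periodicity} cancelling between the numerator and denominator $s$-factors of each $A^{\ve}_{J}$; in either case $\mathcal{E}$ is forced to be constant in $X_{J}$, and iterating over $J$ gives $\mathcal{E}\equiv\mathrm{const}$. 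In the elliptic (IV) case the factors in \eqref{eq_quasi_periodicity} carry the nontrivial exponentials from $\xi_{2},\xi_{3}$, and collecting their exponents over all $s$-factors of $A^{\ve}_{J}$ yields an overall multiplier whose exponent is proportional to $2\lambda\abs{\bs m}+\abs{\bs g}-2(\lambda+1)$; this vanishes, making $\mathcal{E}$ doubly periodic and hence constant, precisely when the balancing condition \eqref{eq_balancing_condition} holds, which accounts for the ``if and only if''.

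It remains to identify the constant. Having reduced $\mathcal{E}$ to a constant in $\bs X$, I would evaluate it at a convenient configuration --- letting the $X_{J}$ tend to a common reference point or cusp at which all but finitely many shift contributions $A^{\ve}_{J}$ vanish --- so that the surviving terms combine with the constant part of $\cV^{0}$ from \eqref{eq_Sen_coeff_potential} into a single application of the same theta addition theorem, reproducing the constant in \eqref{eq_Source_identity}. The main obstacle I anticipate is the holomorphy (pole-cancellation) step above: unlike the non-deformed case, the pair factor $\phi(\ve X_{J}+\ve' X_{K};m_{J},m_{K})$ takes four different forms in \eqref{eq_Sen_function_pair} according to the relation between the masses $m_{J},m_{K}\in\Lambda$, so the residue cancellations must be checked for each pairing of mass values, and it is exactly here that the parameter shifts $d(g_{\nu},m)$ of \eqref{eq_Sen_parameters} and the arguments in \eqref{eq_Sen_coeff_pair} are tuned to make the cancellations work. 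Organizing these cancellations uniformly, rather than the final constant evaluation, is the crux of the proof.
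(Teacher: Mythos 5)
Your plan is correct and follows essentially the same route as the paper: collapse the ratios $\Phi(\bs X_{J,\ve})/\Phi(\bs X)$ to $s$-products via \eqref{eq_Gamma_function_equation}, verify the four mass pairings separately, and then run a Liouville/elliptic-function argument in one variable at a time, with the balancing condition \eqref{eq_balancing_condition} arising from the quasi-periodicity multipliers in case (IV) and the constant identified by a limiting configuration. The only organizational difference is that the paper isolates the Liouville step into a standalone functional identity with free parameters $\gamma,\bs a,\bs c,\bs d,\bs n$ (Lemma~\ref{lemma_key_lemma}), so that the proof of the theorem itself reduces to the algebraic matching of $\psi$, $\phi$ and these parameters, whereas you would carry out the residue analysis directly on $\cH_{\cN}\Phi/\Phi$.
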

\noindent(Proof of Theorem \ref{thm_source_identity} is given in Section~\ref{section_proof_of_theorem}.)

It is also straightforward to deduce the following result.
\begin{lemma}\label{lemma_Sen_conjugate_operator}
Let $\cN \in \Z_{\geq0}$, $\bs X \in (\C\setminus\Omega)^{\cN}$ with complex variables $X_{J}$ such that $X_{J}\neq X_{K} (\text{mod } \Omega)$ $\forall K \neq J$ and $\bs m \in \Lambda^{\cN}$. For $\cH_{\cN}(\bs X ; \bs m)$ in \eqref{eq_Sen_operator} and $\Phi( \bs X ; \bs m)$ in \eqref{eq_Sen_function},\eqref{eq_Sen_function_single}, and \eqref{eq_Sen_function_pair}, the analytic difference operator
\begin{equation}\label{eq_Sen_operator_reduced}
\cA_{\cN}(\bs X ; \bs m) = \Phi(\bs X; \bs m)^{-1} \circ \cH_{\cN}(\bs X ; \bs m) \circ \Phi(\bs X; \bs m)
\end{equation}
is identical with 
\begin{equation}\label{eq_lemma_1}
\sum_{\ve = \pm} \sum_{J=1}^{\cN} s(\imag \lambda m_{J} \beta) \cV^{\ve}_{J}(\bs X ; \bs m) \exp\Bigl( - \ve \imag \frac{\beta}{m_{J}} \frac{\partial}{\partial X_{J}} \Bigr) + \cV^{0}(\bs X ; \bs m), 
\end{equation}
with $\cV_{J}^{\pm}$ in \eqref{eq_Sen_coeff_shift} and $\cV^{0}$ in \eqref{eq_Sen_coeff_potential}, for all cases \emph{(I)}-\emph{(IV)}. Furthermore, the operator $\cA_{\cN}(\bs X ; \bs m)$ also equals
$$
\sum_{\ve = \pm} \sum_{J=1}^{\cN} s(\imag \lambda m_{J} \beta) \cV_{J}^{\ve}(\bs X ; \bs m ) \Bigl( \exp\bigl( - \ve \imag \frac{\beta}{m_{J}} \frac{\partial}{\partial X_{J}} \bigr) -1 \Bigr)  + \frac{\prod_{\nu=1}^{\rho}s(\half \omega_{\nu})^{2}}{4} s( \imag \beta [ 2 \lambda \abs{\bs m} + \abs{\bs g} - \frac{1}{2} (\rho +1 ) ( \lambda+1)] - \abs{\bs\omega})
$$
in the rational \emph{(I)}, trigonometric \emph{(II)}, and hyperbolic \emph{(III)} cases, while in the elliptic \emph{(IV)} case only if the parameters satisfy the balancing condition in \eqref{eq_balancing_condition}.
\end{lemma}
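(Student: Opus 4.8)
The plan is to prove the two asserted forms in turn, with the passage from the symmetric operator $\cH_{\cN}$ to the non-symmetric form \eqref{eq_lemma_1} carrying essentially all of the analytic content, and the reduced form following as an immediate corollary of Theorem~\ref{thm_source_identity}. Throughout I work on the domain where $X_{J}\neq X_{K}\ (\text{mod }\Omega)$, so that $\Phi$ is holomorphic and nonvanishing and the conjugation \eqref{eq_Sen_operator_reduced} is legitimate.

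First I would establish that the similarity transformation turns the symmetric shift coefficients $\cV^{\ve}_{J}{}^{1/2}(\cdots)\cV^{-\ve}_{J}{}^{1/2}$ of \eqref{eq_Sen_operator} into the bare coefficients $\cV^{\ve}_{J}$ of \eqref{eq_lemma_1}, while leaving $\cV^{0}$ untouched: being a multiplication operator, $\cV^{0}$ commutes with conjugation by $\Phi$, so this part is unconditional. Writing $T^{\ve}_{J}=\exp(-\ve\imag(\beta/m_{J})\partial/\partial X_{J})$ for the shift $X_{J}\mapsto X_{J}-\ve\imag\beta/m_{J}$ and using the elementary rule that pulling a multiplication operator to the left of $T^{\ve}_{J}$ shifts its argument, a direct computation reduces the claim to the single first-order ``gauge'' identity
\begin{equation*}
\frac{\Phi(\bs X - \ve \imag (\beta/m_J) \bs e_J ; \bs m)}{\Phi(\bs X ; \bs m)} = \left( \frac{\cV^{\ve}_J(\bs X ; \bs m)}{\cV^{-\ve}_J(\bs X - \ve \imag (\beta/m_J) \bs e_J ; \bs m)} \right)^{1/2},
\end{equation*}
one for each $\ve=\pm$ and each $J$, where $\bs e_{J}$ is the $J$-th standard unit vector. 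Since $\Phi$ in \eqref{eq_Sen_function} factorizes into the single-variable functions $\psi$ and the pair functions $\phi$, and since shifting $X_{J}$ affects only the factors containing $X_{J}$, both sides split into a single-particle part (from $\psi(X_{J};m_{J})$ and the prefactor of $\cV^{\ve}_{J}$ in \eqref{eq_Sen_coeff_shift}) and a product of pair parts over $K\neq J$ (from the $\phi(\ve'X_{J}+\ve''X_{K};m_{J},m_{K})$ and the factors $f_{\pm}(X_{J}+\delta X_{K};m_{J},m_{K})$ of \eqref{eq_Sen_coeff_shift}, \eqref{eq_Sen_coeff_pair}). It therefore suffices to verify the gauge identity factor by factor.

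The main obstacle is exactly this factor-by-factor verification, which is where the defining functional equation \eqref{eq_Gamma_function_equation} for the Gamma function $G$ and the quasi-periodicity \eqref{eq_quasi_periodicity} of $s$ enter. For the single-particle factor one substitutes \eqref{eq_Sen_function_single}, observes that the shift $X_{J}\mapsto X_{J}-\ve\imag\beta/m_{J}$ moves each argument of $G(\,\cdot\,;\beta/m_{J})$ by an integer multiple of $\imag(\beta/m_{J})$, and collapses the resulting $G$-ratios into products of $s$-functions via \eqref{eq_Gamma_function_equation}; matching these against the prefactor $\prod_{\nu}s(\pm X_{J}-\imag d_{\nu,J}\beta)/[s(\pm 2X_{J})s(\pm 2X_{J}-\imag\beta/m_{J})]$ of \eqref{eq_Sen_coeff_shift} is then a bookkeeping check, with \eqref{eq_Sen_parameters} fixing the $d_{\nu,J}$. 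For the pair factors the same reduction must be carried out in each of the four cases $m_{K}\in\{m_{J},-m_{J},+1/\lambda m_{J},-1/\lambda m_{J}\}$ of \eqref{eq_Sen_function_pair}, matching the shifted ratios of $\phi$ against $f_{\pm}$ from \eqref{eq_Sen_coeff_pair}; the two $G$-cases again use \eqref{eq_Gamma_function_equation}, while the two $s$-cases are immediate from the definition of $\phi$. This step is unconditional and holds in all cases \emph{(I)}--\emph{(IV)} irrespective of the balancing condition, since it concerns only the coefficient structure and not the value of the potential.

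Finally, the reduced form is deduced directly from Theorem~\ref{thm_source_identity}. Having shown that $\cA_{\cN}=\Phi^{-1}\circ\cH_{\cN}\circ\Phi$ equals \eqref{eq_lemma_1}, I apply it to the constant function $1$: since $T^{\ve}_{J}1=1$, form \eqref{eq_lemma_1} gives $\cA_{\cN}1=\sum_{\ve,J}s(\imag\lambda m_{J}\beta)\cV^{\ve}_{J}+\cV^{0}$, while the source identity $\cH_{\cN}\Phi=C\,\Phi$ (with $C$ the constant subtracted in \eqref{eq_Source_identity}) gives $\cA_{\cN}1=\Phi^{-1}\cH_{\cN}\Phi=C$. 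Equating the two yields the constant-term identity $\cV^{0}+\sum_{\ve,J}s(\imag\lambda m_{J}\beta)\cV^{\ve}_{J}=C$, valid in cases \emph{(I)}--\emph{(III)} unconditionally and in case \emph{(IV)} precisely under the balancing condition \eqref{eq_balancing_condition}. Substituting $T^{\ve}_{J}=(T^{\ve}_{J}-1)+1$ into \eqref{eq_lemma_1} and using this identity to replace the resulting zero-shift terms together with $\cV^{0}$ by $C$ produces exactly the second claimed form, completing the proof.
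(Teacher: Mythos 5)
Your proposal is correct and follows essentially the same route as the paper: the first form rests on the same gauge identity for $\Phi$ that the paper establishes through the functional equations for $\psi$ and $\phi$ derived in the proof of Theorem~\ref{thm_source_identity}, and the constant-term identity $\sum_{\ve,J}s(\imag\lambda m_{J}\beta)\cV^{\ve}_{J}+\cV^{0}=C$ underlying the second form is the same one the paper extracts from Lemma~\ref{lemma_key_lemma} with the constants \eqref{eq_proof_constants_2}. Your derivation of that identity by applying $\cA_{\cN}$ to the constant function $1$ and invoking Theorem~\ref{thm_source_identity} is a clean, logically equivalent repackaging of the paper's direct appeal to the key lemma.
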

\noindent(Proof of Lemma~\ref{lemma_Sen_conjugate_operator} is given in Section~\ref{section_proof_of_lemma}.)

The source identity implies various interesting identities as special cases, which we will discuss in Section~\ref{section_special_cases}. Among these identities, we find the known kernel function identities for the Koornwinder-van Diejen operators; see Corollaries \ref{cor_vD_Cauchy_KFI} and \ref{cor_vD_dual_Cauchy_KFI}. The most general of these identities is for the {deformed van Diejen operators} in \eqref{eq_deformed_van_Diejen_operator}: The operator $H_{N,\tN}(\bs x , \bs\tx ; \bs g , \lambda , \beta)$ \eqref{eq_deformed_van_Diejen_operator} is obtained in the special case where $\cN = N + \tN$ ($N,\tN\in\Z_{\gtr0}$) and 
\begin{equation}\label{eq_deformed_case}
(m_{J},X_{J})= \begin{cases}( 1 , x_{J} ) \quad &\text{for } J=1,\ldots,N\\
( -1/\lambda , \tx_{J-N} ) \quad &\text{for } J-N=1,\ldots,\tN
\end{cases}
\end{equation}
in \eqref{eq_Sen_operator}, while the most general case, given when $\cN = N + \tN + M + \tM$ ($N,\tN,M,\tM\in\Z_{\gtr0}$) and
\begin{equation}
(m_{J},X_{J})= \begin{cases}( 1 , x_{J} ) \quad &\text{for } J=1,\ldots,N\\
( -1/\lambda , \tx_{J-N} ) \quad &\text{for } J-N=1,\ldots,\tN\\
(-1 , y_{J-N-\tN}) \quad &\text{for } J- N - \tN = 1 ,\ldots, M\\
(+1/\lambda , \ty_{J-N-\tN-M})\quad &\text{for } J-N-\tN-M = 1,\ldots, \tM
\end{cases}
\label{eq_deformed_KF_case}
\end{equation}
in \eqref{eq_Source_identity}, yields a kernel function identity for a pair of deformed Koornwinder-van Diejen operators; see Corollary~\ref{cor_deformed_KFI}.

The plan of the paper is as follows: We start by recalling the Gamma functions used in this paper in Section~\ref{section_gamma_function}. The proof of Theorem~\ref{thm_source_identity} and Lemma~\ref{lemma_Sen_conjugate_operator} is given in Section~\ref{section_proof}. In Section~\ref{section_special_cases}, we state the various special cases of Theorem~\ref{thm_source_identity}, including kernel function identities. We conclude with some remarks in Section~\ref{section_final}.

\section{On the Gamma functions}\label{section_gamma_function}

In this Section, we collect the definitions needed for the Gamma functions used in this paper. For a more detailed discussion, we refer the reader to, for example, \cite{Rui97,KNS09}. The Gamma functions $G(x;\alpha)$ are defined by \eqref{eq_Gamma_function_equation}, but this functional equation \eqref{eq_Gamma_function_equation} does not have a unique solution. Let us illustrate this with a simple example: Suppose that $G(x;\alpha)$ satisfies \eqref{eq_Gamma_function_equation}, then the function $Q(\sinh( 2 \pi x / \alpha)) G(x;\alpha)$ also satisfies \eqref{eq_Gamma_function_equation} for any meromorphic function $Q(x)$ due to the periodicity of the $\sinh$-function. More generally, $G(x;\alpha)$ is only defined up to multiplication by any $\imag \alpha$-periodic function. We refer to such functions as \emph{quasi-constants}. The issue of how to fix the quasi-constant was addressed by Ruijsenaars in \cite{Rui97} with the construction of so-called minimal solutions. As such, we restrict our attention to particular solutions of \eqref{eq_Gamma_function_equation}. 
We wish to stress that although quasi-constants do not affect the difference equation \eqref{eq_Gamma_function_equation}, they can alter the analytical properties of the Gamma function, and by extension the kernel functions, to a large extent.

Furthermore, it is worth noting that if $G_{1}(x;\alpha)$ is a functions satisfying \eqref{eq_Gamma_function_equation}, then function $G_{2}(x;\alpha) = G_{1}(-x ; -\alpha)$ also satisfies \eqref{eq_Gamma_function_equation}, but for a different constant $c$, as can be checked using \eqref{eq_Gamma_function_equation}. These two solutions of \eqref{eq_Gamma_function_equation} will not necessarily only differ by a quasi-constant. Suppose then that there is a non-zero meromorphic function $G_{1}(x;\alpha)$ for $\Re(\alpha) \gtr 0$, then it is clear that the function $G_{2}$ is defined for $\Re(\alpha)\less 0$ and as such is a different extensions of the Gamma function. We refer the reader to \cite[Appendix~A]{AHL14} for a more detailed discussion. Following \cite{AHL14}, we define our Gamma function $G(x; \alpha)$ for $\Re(\alpha)\neq 0$ by 
\begin{equation}
G(x;\alpha) = \begin{cases}G_1(x;\alpha) \quad \text{for } \Re(\alpha)\gtr 0\\
G_{2}(x; \alpha)\quad \text{for } \Re(\alpha) \less 0,
\end{cases}
\label{eq_Gamma_function_continuation}
\end{equation}
and give the $G_{1}$ Gamma functions that satisfy \eqref{eq_Gamma_function_equation} for $\Re(\alpha)\gtr 0$ explicitly at the end of this section.
It is then clear that the Gamma function $G(x;\alpha)$ satisfies
\begin{equation}
G(x ; - \alpha) = G(- x ; \alpha) \quad (\Re(\alpha) \neq 0).
\label{eq_Gamma_minus_alpha}
\end{equation}

We then finish this section by giving\footnote{These Gamma functions are the same as those Gamma functions used in \cite{AHL14}.} the $G_1$ functions that satisfy \eqref{eq_Gamma_function_equation} for $\Re(\alpha)\gtr0$.

\subsection{Rational (I) case}
We recall that the Euler $\Gamma$-function, defined for example by the infinite product
$$
\Gamma(x) = \frac{1}{x} \prod_{n\in\Z_{\gtr0}} \frac{(1 - 1/n)^x}{1+ x/ n}
$$
for all complex $x \in \C\setminus\Z_{\leq 0}$ \cite{WhitWat}, satisfies the difference equation
$$
\Gamma(x + 1) = x \Gamma(x).
$$
In the rational case, we have that 
$$
G_{1}(x; \alpha) = \Gamma( 1/2 + x / \imag \alpha)
$$
satisfies \eqref{eq_Gamma_function_equation} with $c= 1/\imag \alpha$. In this case, it is clear that $G_{1}(x;-\alpha) = G_{1}(-x ; \alpha)$ holds without our restriction \eqref{eq_Gamma_minus_alpha}. 

\subsection{Trigonometric (II) case}
A standard choice for the trigonometric Gamma function is given by \cite{Rui97}
$$
G_{\text{R}}(r, \alpha ; x) = \prod_{n\in\Z_{\gtr 0}} ( 1 - \e^{- r \alpha ( 2 n - 1 )} \e^{2 \imag r x})^{-1} \quad (r , \Re(\alpha)\gtr 0 )
$$
which satisfies the functional equation
$$
\frac{G_{\text{R}}(r, \alpha ; x + \imag \alpha / 2)  }{G_{\text{R}}(r, \alpha ; x - \imag \alpha / 2) } = (1 - \e^{2 \imag r x}).
$$

We then define the $G_{1}$ function as
$$
G_{1}(x; \alpha) = \exp({- r x^{2} / 2 \alpha}) G_{\text{R}}( r , \alpha ; x ) \quad (r, \Re(\alpha)\gtr0)
$$
which satisfies \eqref{eq_Gamma_function_equation} for $c= -2 \imag r$.


\subsection{Hyperbolic (III) case}
A standard choice for the hyperbolic Gamma function is given by \cite{Rui97}
$$
G_{\text{R}}(a , \alpha ; x ) = \exp\Bigl( \int_{0}^{\infty} \frac{dy}{y} \bigl( \frac{\sin( 2 x y) }{2 \sinh( a y) \sinh( \alpha y)} - \frac{x}{a \alpha y} \bigr) \Bigr) \quad ( \abs{\Im( x )} \less \Re( a + \alpha ) / 2, \Re(a), \Re(\alpha) \gtr 0),
$$
which satisfies the functional equation
$$
\frac{G_{\text{R}}(a , \alpha ; x + \imag \alpha / 2) }{G_{\text{R}}(a , \alpha ; x - \imag \alpha / 2) } = 2 \cosh( \pi x / a).
$$
We then define 
$$
G_{1}(x ; \alpha) = G_{\text{R}}(a , \alpha ; x - \imag a /2 ) \quad ( \abs{ \Im(x)- \imag a /2  } \less \Re( a + \alpha ) /2 )
$$
which satisfies \eqref{eq_Gamma_function_equation} with $c = - 2 \pi \imag / a$.

\subsection{Elliptic (IV) case}
A standard choice for the elliptic Gamma function is given by \cite{Rui97}
$$
G_{\text{R}}(r , a , \alpha ; x) = \prod_{n,m \in \Z_{\gtr0}} \frac{(1- \e^{-r a ( 2n - 1 )} \e^{- r \alpha ( 2 m -1 )} \e^{-2 \imag r x})}{(1- \e^{-r a ( 2n - 1 )} \e^{- r \alpha ( 2 m -1 )} \e^{2 \imag r x})} \quad (r , \Re(a),\Re(\alpha) \gtr 0)
$$
which satisfies the functional equation 
$$
\frac{G_{\text{R}}(r , a , \alpha ; x+ \imag \alpha / 2)}{G_{\text{R}}(r , a , \alpha ; x - \imag \alpha / 2)} = \prod_{n\in\Z_{\gtr0}} (1 - \e^{- r a ( 2n -1 )} \e^{-2 \imag r x})(1- \e^{-r a ( 2 n - 1) } \e^{2 \imag r x}) \quad (r , \Re(a ) \gtr 0).
$$
Recalling that the (odd) Jacobi theta function has the product representation \cite{WhitWat}
\begin{equation}
\vartheta_{1}(r x \lvert \tau ) = 2 \e^{\quarter \imag \pi \tau} \sin( r x ) \prod_{n\in\Z_{\gtr 0}}(1 - \e^{ 2 \imag \pi n \tau }) (1- \e^{2 \imag \pi n \tau } \cos( 2 r x) + \e^{4 \imag \pi n \tau}) \quad (\Im(\tau) \gtr 0),
\label{eq_Jacobi_theta_function}
\end{equation}
we then define 
$$
G_{1}(x ; \alpha) = \exp( - r x^{2} / 2 \alpha ) G_{\text{R}}(r , a , \alpha ; x - \imag a / 2)
$$
which satisfies \eqref{eq_Gamma_function_equation} for 
$$
c = - \imag \e^{\quarter r a} \prod_{n\in\Z_{\gtr 0}} (1 - \e^{- 2 r n a})^{-1}.
$$

\section{Proof of main results}\label{section_proof}

Before giving the proof of the source identity, and subsequent results, we need the following key functional identity.
\begin{lemma}\label{lemma_key_lemma}
Let $\cN\in\Z_{\geq0}$, $\bs X, \bs m, \bs a \in\C^{\cN}$, $\gamma\in\C$ such that $\Im(\gamma)\neq 0$, $\bs c=(c_{0},\ldots, c_{\rho}), \bs d=(d_{0},\ldots,d_{\rho}) \in \C^{\rho+1}$ be $(\text{\emph{mod} } \Omega)$ distinct complex numbers, and $\bs n =(n_{0},\ldots,n_{2 \rho +1}) \in \C^{2 \rho + 2}$ with $s(x)$, $\rho$, and $\bs\omega=(\omega_{0},\ldots,\omega_{\rho})$ as defined above. Then 
\begin{equation}
\begin{split}
\sum_{\ve = \pm} & \sum_{J=1}^{\cN} s( \gamma m_{J} ) \prod_{\delta = \pm} \prod_{K\neq J}^{\cN} \frac{s( X_{J} + \delta X_{K} + \ve( a_{J} - a_{K} - \gamma m_{K}))}{s(X_{J} + \delta X_{K} + \ve( a_{J} - a_{K}) )}\\ 
&\cdot \prod_{\nu=0}^{\rho} \frac{s(\ve X_{J} - \half \gamma m_{J} - \half \omega_{\nu}) s( \ve X_{J} + a_{J} - c_{\nu} -\half \omega_{\nu} - n_{\nu}) s(\ve X_{J} + a_{J} - d_{\nu} - \half \omega_{\nu} - n_{\nu+\rho+1})}{ s(\ve X_{J} - \half \omega_{\nu}) s(\ve X_{J} + a_{J} - c_{\nu} - \half \omega_{\nu}) s(\ve X_{J} + a_{J} - d_{\nu} - \half \omega_{\nu}) } \\
- \sum_{\nu=0}^{\rho} &\Bigl\{\bigl( \prod_{\delta = \pm} \prod_{J=1}^{\cN} \frac{s( \delta X_{J} + \half \omega_{\nu} +c_{\nu} - a_{J} - \gamma m_{J})}{s(\delta X_{J} + \half \omega_{\nu} +c_{\nu} - a_{J})} \bigr) \frac{\prod_{\mu=0}^{\rho}s( \half( \omega_{\nu} - \omega_{\mu}) + c_{\nu} - c_{\mu} -  n_{\mu})}{\prod_{\mu\neq\nu}^{\rho} s( \half( \omega_{\nu} - \omega_{\mu}) + c_{\nu} - c_{\mu})} \Bigr. \\
&\cdot \bigl( \prod_{\mu=0}^{\rho} \frac{s(\half( \omega_{\nu}-\omega_{\mu}) + c_{\nu} - d_{\mu} - n_{\mu+\rho +1})}{s(\half( \omega_{\nu}-\omega_{\mu}) + c_{\nu} - d_{\mu})}\bigr) + \bigl( \prod_{\delta = \pm} \prod_{J=1}^{\cN} \frac{s( \delta X_{J} +\half \omega_{\nu} + d_{\nu}- a_{J} - \gamma m_{J})}{s( \delta X_{J} +\half \omega_{\nu} + d_{\nu}- a_{J})} \bigr) \\ \Bigl.
&\cdot \bigl(\prod_{\mu=0}^{\rho} \frac{s(\half(\omega_{\nu}-\omega_{\mu}) + d_{\nu} - c_{\mu} - n_{\mu})}{s(\half(\omega_{\nu}-\omega_{\mu}) + d_{\nu} - c_{\mu})}\bigr) \frac{\prod_{\mu=0}^{\rho}s(\half( \omega_{\nu} - \omega_{\mu}) + d_{\nu} - d_{\mu} - n_{\mu+\rho +1})}{\prod_{\mu\neq\nu}^{\rho} s(\half( \omega_{\nu} - \omega_{\mu}) + d_{\nu} - d_{\mu})} \Bigr\} \\
= & s\Bigl( 2 \gamma \sum_{J=1}^{\cN} m_{J} + \sum_{\nu=0}^{2\rho+1} n_{\nu} \Bigr)
\label{eq_Key_eq}
\end{split}
\end{equation}
holds as an identity for a meromorphic function in $\bs X$ in the rational \emph{(I)}, trigonometric \emph{(II)}, and hyperbolic \emph{(III)} cases. In the elliptic \emph{(IV)} case, \eqref{eq_Key_eq} holds true if and only if the relation
\begin{equation}\label{eq_lemma_balancing_condition}
2 \gamma \sum_{J=1}^{\cN}m_{J} + \sum_{\nu=0}^{7} n_{\nu} =0 \quad \text{\emph{(IV)}}
\end{equation}
is satisfied.
\end{lemma}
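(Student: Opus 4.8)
The plan is to prove \eqref{eq_Key_eq} by a Liouville-type argument, treating its left-hand side as a meromorphic function of the single variable $X_1$ (all other $X_K$ and all parameters held generic) and showing that it is a holomorphic, periodic function, hence constant, whose value is then read off in a degenerate limit. Write $L(\bs X)$ for the left-hand side of \eqref{eq_Key_eq} and $R = s(2\gamma\sum_{J}m_{J}+\sum_{\nu}n_{\nu})$ for the right-hand side; the goal is $L\equiv R$.

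First I would determine the behaviour of $L$ under the lattice shifts $X_1\mapsto X_1+\omega_{\mu}$. Every factor of $L$ is a ratio $s(X_1+A)/s(X_1+B)$, and by \eqref{eq_quasi_periodicity} the $\epsilon_{\mu}$ and the $X_1$-dependent part of the exponent cancel in the ratio, leaving a multiplier $\e^{2\imag r\xi_{\mu}(A-B)}$. Collecting these contributions over all factors, the exponents reassemble into the balancing combination, so that $L$ is genuinely $\Omega$-periodic precisely in the cases (I)--(III) (where the $\xi_{\mu}$ relevant to the available shifts vanish) and, in the elliptic case (IV), precisely when \eqref{eq_lemma_balancing_condition} holds. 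This already yields the \emph{only if} direction: since $R$ is independent of $\bs X$, the identity $L\equiv R$ forces $L$ to be $\Omega$-periodic, which in case (IV) is impossible unless the balancing condition is met.

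The heart of the argument is to show that $L$ is in fact holomorphic in $X_1$, i.e. that its apparent poles cancel. These fall into three families. The collision poles at $X_1\equiv\mp X_K-\ve(a_J-a_K)\pmod\Omega$ coming from the $s(X_1+\delta X_K+\ve(a_J-a_K))$ denominators of the $J=1$ summands cancel in pairs against the matching denominators of the $J=K$ summands, the prefactors $s(\gamma m_J)$ and the single-variable products conspiring to make the two residues opposite. The half-period poles at $X_1\equiv\pm\half\omega_{\nu}\pmod\Omega$ arising from the $s(\ve X_1-\half\omega_{\nu})$ factors, together with the source poles at $X_1\equiv\pm(c_{\nu}+\half\omega_{\nu}-a_J)$ and $X_1\equiv\pm(d_{\nu}+\half\omega_{\nu}-a_J)$, cancel against the $\nu$-sum; indeed, that sum is precisely the collection of residues left by the first sum at these points, which is the structural reason for its appearance. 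Carrying out these residue matchings, uniformly over the $\rho+1$ half-periods and the four cases (I)--(IV), is the main obstacle and by far the most computation-heavy step, and it is here that the particular shifts in \eqref{eq_Key_eq} are forced.

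Granting holomorphy, $L$ is a holomorphic function of $X_1$ that is $\Omega$-periodic and, by the explicit form \eqref{eq_s_function}, of bounded growth in the remaining directions (each summand being a ratio with matching numbers of numerator and denominator $s$-factors); in case (I) it is simply holomorphic and bounded at infinity. In every case this forces $L$ to be constant in $X_1$ by Liouville's theorem, and by the symmetry of the construction the same holds for every $X_J$, so $L$ is a constant independent of $\bs X$. To identify this constant with $R$ I would pass to a degenerate limit: in cases (I)--(III), sending $\Re(X_1)\to+\infty$ replaces each ratio $s(X_1+A)/s(X_1+B)$ by the explicit exponential limit dictated by \eqref{eq_s_function}, and the accumulated exponent reassembles into $s(2\gamma\sum_{J}m_{J}+\sum_{\nu}n_{\nu})$. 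Equivalently one may induct on $\cN$: independence of $X_{\cN}$ lets one specialize $X_{\cN}$ so as to reduce the $\cN$-variable identity to the $(\cN-1)$-variable one, the base case $\cN=0$ being a pure theta-constant identity among the $\nu$-sum terms that can be checked directly. Either route pins the constant to $R$ and completes the proof.
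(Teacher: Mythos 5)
Your proposal follows essentially the same route as the paper: a Liouville argument in $X_1$, cancellation of all residues, boundedness in the appropriate direction for cases (I)--(III), permutation symmetry to conclude constancy in all variables, evaluation of the constant in a limit, and a quasi-periodicity analysis under $X_1\mapsto X_1+\omega_\mu$ that isolates the balancing condition in the elliptic case. One small correction to your pole bookkeeping: the half-period poles at $X_1\equiv\pm\half\omega_\nu \ (\mathrm{mod}\ \Omega)$ are not absorbed by the $\nu$-sum (whose denominators are generically nonvanishing there); they cancel between the $\ve=+$ and $\ve=-$ terms of the $J=1$ summand, and the $\nu$-sum only accounts for the residues at $X_1\equiv\pm(c_\nu+\half\omega_\nu-a_J)$ and $X_1\equiv\pm(d_\nu+\half\omega_\nu-a_J)$.
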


It is not clear to us whether the result in Lemma~\ref{lemma_key_lemma} can be found in the literature. A similar identity was used in \cite{Rui09} and \cite{KNS09} to prove the kernel function identity for the the van Diejen operator \eqref{eq_van_Diejen_operator}. It is also possible that \eqref{eq_Key_eq} can be obtained from the determinant identities of type $BC$ \cite{Mas13,KMN16}. As this is a key Lemma for our main results, we proceed to give a short proof using Liouville's Theorem.

\begin{proof}[Proof of Lemma~\ref{lemma_key_lemma}]
Consider the l.h.s. of \eqref{eq_Key_eq} as a function of $X_{1}$ (say) by fixing all the other variables to generic values. 
It is then clear that this function has simple poles determined by the zeroes of the $s$-functions in the denominators. Computing the residues at these points, we find that they all cancel in the rational (I), trigonometric (II), and hyperbolic (III) cases and that the l.h.s. of \eqref{eq_Key_eq} is an entire function of $X_{1}$. We also find that the function is bounded as $X_{1}\to \infty$ (I), $X_{1}\to - \imag \infty$ (II), and $X_{1}\to \infty$ (III) respectively. By Liouville's Theorem, we have that this function is a constant in $X_{1}$ (say) in the rational (I), trigonometric (II), and hyperbolic (III) cases. Since the l.h.s. of \eqref{eq_Key_eq} is invariant under any permutations of the $X_{J}$-variables, we have that the l.h.s. of \eqref{eq_Key_eq} is a constant in all these cases. Taking suitable limits of the variables (same as the $X_{1}$ limits above) yields the r.h.s. of \eqref{eq_Key_eq} for the rational (I), trigonometric (II), and hyperbolic (III) cases.

For the elliptic (IV) case, let us express the l.h.s. of \eqref{eq_Key_eq} as
$$
\sum_{J=1}^{\cN} s( \gamma m_{J} ) (C^{+}_{J}(X_{1}) + C^{-}_{J}(X_{1})) - \sum_{\nu=0}^{3}C^{0}_{\nu}(X_{1}) ,
$$
where we suppress the dependence on the other variables and parameters.
By straightforward calculations, using \eqref{eq_quasi_periodicity}, we find that $C^{0}_{\nu}$ and $C^{\pm}_{J}$, for $J=2,\ldots,\cN$, are elliptic functions of $X_{1}$ while
$$
C_{1}^{\pm}(X_{1} + \omega_{\mu}) = \exp\bigl(\mp 2 \imag r \xi_{\mu}( 2 \gamma \sum_{J=1}^{\cN} m_{J} + \sum_{\nu=0}^{7} n_{\nu})\bigr) C_{1}^{\pm}(X_{1}) \quad(\mu=0,\ldots,3).
$$
Imposing the balancing condition \eqref{eq_lemma_balancing_condition}, we find by straightforward calculations that all the residues vanish and that the l.h.s. of \eqref{eq_Key_eq} is therefore an elliptic function without any poles and thus a constant. Taking the limit $X_{1}\to0$, we find that this constant is zero, which equals the r.h.s. of \eqref{eq_Key_eq} under \eqref{eq_lemma_balancing_condition}.
\end{proof}

\subsection{Proof of Theorem \ref{thm_source_identity}}\label{section_proof_of_theorem}
We are now in a position to prove the main Theorem.
\begin{proof}[Proof of Theorem~\ref{thm_source_identity}] 
Assuming, for now, that there exists a non-zero meromorphic function $\Phi(\bs X; \bs m)$ that satisfies an eigenvalue equation for the operator in \eqref{eq_Sen_operator}, we make the ansatz in \eqref{eq_Sen_function} and compute
\begin{multline}\label{eq_proof_eigenvalue_identity}
\cH_{\cN}(\bs X ; \bs m) \Phi(\bs X ; \bs m) = \Bigl[ \cV^{0}(\bs X ; \bs m) +  \sum_{\ve = \pm} \sum_{J=1}^{\cN} s(\imag \lambda m_{J} \beta) \Bigr. \\ 
\cdot \Bigl( \frac{\prod_{\nu=0}^{2\rho +1 } s( \ve X_{J} - \imag d_{\nu,J} \beta ) s(-\ve X_{J} + \imag \beta /m_{J} - \imag d_{\nu,J}\beta)}{s(2\ve X_{J}) s(2 \ve X_{J} - \imag \beta/ m_{J}) s(-2 \ve X_{J} + 2 \imag \beta/m_{J}) s(-2\ve X_{J} + \imag \beta / m_{J})} \Bigr)^{\half} \frac{\psi( X_{J} - \imag \ve \beta/m_{J} ; m_{J})}{\psi( X_{J} ; m_{J})}
\\
\cdot\prod_{\delta=\pm}\prod_{K \less J}\Bigl\{\Bigl(  f_{\ve}(X_{J} + \delta X_{K} ; m_{J} , m_{K}) f_{-\ve}(X_{J} + \delta X_{K} - \imag \ve \beta/m_{J} ; m_{J},m_{K})\Bigr)^{\half} \Bigr. \\
\Bigl. \cdot \frac{\phi( X_{J} + \delta X_{K} -  \ve \imag \beta/m_{J}; m_{J} , m_{K}) \phi( - X_{J} + \delta X_{K} +  \ve \imag \beta/m_{J}; m_{J} , m_{K}) }{\phi( X_{J} + \delta X_{K}; m_{J} , m_{K})\phi( -X_{J} + \delta X_{K}; m_{J} , m_{K})  } \Bigr\}
\\
\cdot\prod_{\delta=\pm}\prod_{K \gtr J} \Bigl\{\Bigl(  f_{\ve}(X_{J} + \delta X_{K} ; m_{J} , m_{K}) f_{-\ve}(X_{J} + \delta X_{K} - \imag \ve \beta/m_{J} ; m_{J},m_{K})\Bigr)^{\half} \Bigr. \\ \Bigl. \Bigl.
\cdot \frac{\phi( X_{J} + \delta X_{K} -  \ve \imag \beta/m_{J}; m_{K} , m_{J}) \phi( - X_{J} + \delta X_{K} +  \ve \imag \beta/m_{J}; m_{K} , m_{J}) }{\phi( X_{J} + \delta X_{K}; m_{K} , m_{J})\phi( -X_{J} + \delta X_{K}; m_{K} , m_{J}) }\Bigr\}\Bigr] \Phi(\bs X ; \bs m).
\end{multline}
We then determine the functions $\psi(x;m)$ and $\phi(x ; m , m')$ $(m,m'\in\Lambda)$ such that we can apply the results of Lemma~\ref{lemma_key_lemma}. Let us start by determining the functions $\phi(x;m,m')$ such that they satisfy the relations 
\begin{subequations}
\begin{multline}\label{eq_proof_relation_1}
\prod_{\delta=\pm} f_{\pm}(X_{J} + \delta X_{K} ; m_{J} , m_{K})^{\half} f_{\mp}(X_{J} + \delta X_{K} \mp \imag \beta/m_{J} ; m_{J},m_{K})^{\half} \\  \cdot
\frac{\phi( X_{J} + \delta X_{K} \mp \beta/m_{J}; m_{J} , m_{K}) \phi( - X_{J} + \delta X_{K} \pm \beta/m_{J}; m_{J} , m_{K}) }{\phi( X_{J} + \delta X_{K}; m_{J} , m_{K})\phi( -X_{J} + \delta X_{K}; m_{J} , m_{K})  } \\
=  \prod_{\delta = \pm} \frac{s( X_{J} + \delta X_{K} + \ve( a_{J} - a_{K} - \gamma m_{K}))}{s(X_{J} + \delta X_{K} + \ve( a_{J} - a_{K}) )} \quad ( \ve = \pm )
\end{multline}
for $K \less J$ and 
\begin{multline}\label{eq_proof_relation_2}
\prod_{\delta=\pm} f_{\pm}(X_{J} + \delta X_{K} ; m_{J} , m_{K})^{\half} f_{\mp}(X_{J} + \delta X_{K} \mp \imag \beta/m_{J} ; m_{J},m_{K})^{\half} \\  \cdot
\frac{\phi( X_{J} + \delta X_{K} \mp \beta/m_{J}; m_{K} , m_{J}) \phi( - X_{J} + \delta X_{K} \pm \beta/m_{J}; m_{K} , m_{J}) }{\phi( X_{J} + \delta X_{K}; m_{K} , m_{J})\phi( -X_{J} + \delta X_{K}; m_{K} , m_{J})  } \\
=  \prod_{\delta = \pm} \frac{s( X_{J} + \delta X_{K} + \ve( a_{J} - a_{K} - \gamma m_{K}))}{s(X_{J} + \delta X_{K} + \ve( a_{J} - a_{K}) )} \quad ( \ve = \pm )
\end{multline}
\end{subequations}
for $K \gtr J$, for suitable constants $\gamma$ and $\bs a$ to be determined. Also, assume that $a_{J} = a(m_{J})$ ($J=1,\ldots, \cN$) for some function $a(m)$. We proceed by considering the conditions \eqref{eq_proof_relation_1} and \eqref{eq_proof_relation_2} in the different cases where $(m_{J},m_{K})=(m,m)$, $(m,-m)$, $(m,+1/\lambda m)$, and $(m,-1/\lambda m)$ for all $m\in \Lambda$.

When $(m_{J},m_{K})=(m,m)$, then both \eqref{eq_proof_relation_1} and \eqref{eq_proof_relation_2} become
\begin{multline}
\frac{\phi( X_{J} + \delta X_{K} \mp \imag \beta/2m; m , m) \phi( - X_{J} + \delta X_{K} \pm \beta/2m; m , m) }{\phi( X_{J} + \delta X_{K} \pm \imag \beta / 2 m ; m , m)\phi( -X_{J} + \delta X_{K}\mp \imag \beta / 2 m; m , m)  } = \\ 
 \prod_{\delta=\pm} \frac{s(X_{J} + \delta X_{K} \pm \imag \beta / 2 m - \ve\gamma m)}{s(X_{J} + \delta X_{K} \pm \imag \beta / 2 m )} \Bigl( \frac{s( X_{J} + \delta X_{K} \pm \imag \beta / 2 m) s( X_{J} + \delta X_{K} \mp \imag \beta /2 m)}{s(X_{J} + \delta X_{K} \pm \imag \beta / 2 m \mp \imag \lambda m \beta)s(X_{J} + \delta X_{K} \mp \imag \beta /2 m  \pm \imag \lambda m \beta)} \Bigr)^{\half},
\end{multline}
by shifting the $X_{J}$ variable by $\pm \imag \beta /2m$. The choice
\begin{equation}\label{eq_proof_constants}
\ve = \pm ,\quad \gamma = \imag \lambda \beta
\end{equation}
reduces the equations to
\begin{multline}
\prod_{\delta=\pm}\frac{\phi( X_{J} + \delta X_{K} \mp \imag \beta/2m; m , m) \phi( - X_{J} + \delta X_{K} \pm \imag \beta/2m; m , m) }{\phi( X_{J} + \delta X_{K} \pm \imag \beta / 2 m ; m , m)\phi( -X_{J} + \delta X_{K}\mp \imag \beta / 2 m; m , m)  } \\
= \prod_{\delta=\pm}\Bigl( \frac{s( X_{J} + \delta X_{K} \mp \imag \beta /2 m) s(X_{J} + \delta X_{K} \pm \imag \beta / 2 m \mp \imag \lambda m \beta ) }{s(X_{J} + \delta X_{K} \pm \imag \beta / 2 m )s(X_{J} + \delta X_{K} \mp \imag \beta /2 m  \pm \imag \lambda m \beta)} \Bigr)^{\half}
\end{multline}
which have the common solution $\phi(x ; m , m)$ in \eqref{eq_Sen_function_pair}.

When $(m_{J},m_{K} )= (m,-m)$ for $m\in\Lambda$, then \eqref{eq_proof_relation_1} and \eqref{eq_proof_relation_2}, using \eqref{eq_proof_constants}, become
\begin{multline*}
\prod_{\delta= \pm} \frac{\phi(X_{J} + \delta X_{K} \mp \imag \beta / 2 m ; m , -m)\phi(-X_{J} + \delta X_{K} \pm \imag \beta /2 m ; m , -m)}{\phi(X_{J} + \delta X_{K}  \pm\imag \beta /2 m ; m , -m)\phi(-X_{J} + \delta X_{K}  \mp \imag\beta /2 m; m , -m)} \\
= \prod_{\delta=\pm} \frac{s(X_{J} + \delta X_{K} \pm (a(m) - a(-m))  \pm \beta /2 m \pm \imag \lambda m \beta)}{s(X_{J} + \delta X_{K} \pm (a(m) - a(-m))  \pm \beta /2 m)}
\end{multline*}
by shifting the $X_{J}$ variable by $ \pm \imag \beta /2 m$. (Note that the functions $f_{\pm}$ in \eqref{eq_Sen_operator} satisfy \newline $f_{\pm}(x ; m , m' ) f_{\mp}(x \mp \imag \beta/m  ; m ,m' ) = 1$ when $m' = -m$ or $m'=+1/\lambda m$ for all $m\in\Lambda$.) 
Choosing 
\begin{equation}
\label{eq_a_cond_1}
a(m)-a(-m) = -\imag \beta / 2 m - \imag \lambda m \beta /2  \quad (\forall m\in\Lambda),
\end{equation} yields that $\phi(x; m ,-m)$ should satisfy the difference equations
\begin{equation}
\prod_{\delta=\pm}\frac{\phi( X_{J} + \delta X_{K} \mp \imag \beta/2m; m , -m) \phi( - X_{J} + \delta X_{K} \pm \imag  \beta/2m; m ,- m) }{\phi( X_{J} + \delta X_{K} \pm \imag \beta / 2 m ; m , -m)\phi( -X_{J} + \delta X_{K}\mp \imag \beta / 2 m; m , -m)}
= \prod_{\delta= \pm} \frac{s( X_{J} + \delta X_{K} \pm \imag \lambda m \beta / 2)}{s( X_{J} + \delta X_{k} \mp \imag \lambda m \beta / 2)},
\label{eq_proof_relation_3}\end{equation}
which have a common solution given by $\phi(x ; m , - m)$ in \eqref{eq_Sen_function_pair}.
To see this more clearly, we can consider the upper sign, that is `$-$' for `$\mp$' and `$+$' for `$\pm$', in \eqref{eq_proof_relation_3} and note that the right hand side can be written as
$$
\prod_{\delta= \pm} \frac{s( - X_{J} + \delta X_{K} - \imag \lambda m \beta / 2)}{s( X_{J} + \delta X_{k} - \imag \lambda m \beta / 2)}.
$$
Then it is a straightforward check to see that $\phi(x;m,-m)$ in \eqref{eq_Sen_function_pair} satisfies both difference equations.

When $(m_{J} , m_{K} ) = (m, +1/\lambda m)$ for $m\in\Lambda$, then \eqref{eq_proof_relation_1} and \eqref{eq_proof_relation_2}, using \eqref{eq_proof_constants}, become
\begin{multline*}
\prod_{\delta = \pm} \frac{\phi(X_{J} + \delta X_{K} \mp \imag \beta / 2 m ; m , 1/\lambda m)\phi(-X_{J} + \delta X_{K} \pm \imag \beta / 2m ; m , 1/ \lambda m)}{\phi(X_{J} + \delta X_{K} \pm \imag \beta / 2m; m , 1/\lambda m)\phi(-X_{J} + \delta X_{K} \mp \imag \beta / 2m; m , 1/ \lambda m)} \\
= \prod_{\delta = \pm} \frac{s( X_{J} + \delta X_{K} \pm( a(m) - a(1/\lambda m)) \mp \imag \beta /2 m)}{s( X_{J} + \delta X_{K} \pm( a(m) - a(1/\lambda m)) \pm \imag \beta / 2m )}
\end{multline*}
and
\begin{multline*}
\prod_{\delta = \pm} \frac{\phi( X_{J} + \delta X_{K} \mp \imag \lambda m \beta / 2 ; m , 1/\lambda m)\phi( -X_{J} + \delta X_{K} \pm \imag \lambda m \beta / 2 ; m , 1/\lambda m)}{\phi( X_{J} + \delta X_{K} \pm \imag \lambda m \beta / 2 ; m , 1/\lambda m)\phi( -X_{J} + \delta X_{K} \mp \imag \lambda m \beta / 2; m , 1/\lambda m)} \\
= \prod_{\delta = \pm} \frac{s(X_{J} + \delta X_{K} \pm ( a(1/\lambda m) - a(m)) \mp \imag \lambda m \beta/2)}{s(X_{J} + \delta X_{K} \pm ( a(1/\lambda m) - a(m))\pm \imag \lambda m \beta / 2)},
\end{multline*}
respectively. (The relations is obtained by shifting the $X_{J}$ variable by $\pm \imag \beta / 2m$ and changing $m\to +1/\lambda m$ in \eqref{eq_proof_relation_2}.)
The choice 
\begin{equation}
\label{eq_a_cond_2}a(m)-a(1/\lambda m)=0\quad(\forall m\in\Lambda)
\end{equation}
simplifies the four cases and allows us to find the common solution $\phi(x ; m , + 1/ \lambda m)$ in \eqref{eq_Sen_function_pair}.

When $(m_{J},m_{K}) = (m , - 1/ \lambda m)$ for $m\in\lambda$, then \eqref{eq_proof_relation_1} and \eqref{eq_proof_relation_2}, using \eqref{eq_proof_constants}, become
\begin{multline*}
\prod_{\delta = \pm} \frac{\phi(X_{J} + \delta X_{K} \mp \imag \beta / 2 m ; m , -1/ \lambda m) \phi( - X_{J} + \delta X_{K} \pm \imag \beta / 2 m ; m , -1/ \lambda m)}{\phi(X_{J} + \delta X_{K} \pm \imag \beta / 2 m ; m , -1/ \lambda m) \phi( -X_{J} + \delta X_{K} \mp \imag \beta / 2 m ; m , -1/ \lambda m)} \\
= \prod_{\delta= \pm} \frac{s(X_{J} + \delta X_{K} \pm( a(m) - a(-1/\lambda m)) \pm 3 \imag \beta / 2m)}{s(X_{J} + \delta X_{K} \pm( a(m) - a(-1/\lambda m)) \pm \imag \beta / 2 m)} \\ 
\cdot \Bigl(
\frac{s( X_{J} + \delta X_{K} \mp \imag \lambda m \beta/2 ) s( X_{J} + \delta X_{K} \pm \imag \lambda m \beta / 2)}{s(X_{J} + \delta X_{K} \mp \imag \lambda m \beta / 2 \pm \imag \beta / m) s( X_{J} + \delta X_{K} \pm \imag \lambda m \beta / 2 \mp \imag \beta / m)}
\Bigr)^{1/2}
\end{multline*}
and
\begin{multline*}
\prod_{\delta = \pm} \frac{\phi(X_{J} + \delta X_{K} \pm \imag \lambda m \beta /2 ; m , - 1/\lambda m)\phi(-X_{J} + \delta X_{K} \mp \imag \lambda m \beta / 2 ; m , -1/\lambda m)}{\phi(X_{J} + \delta X_{K} \mp \imag \lambda m \beta / 2 ; m , -1/\lambda m) \phi(-X_{J} + \delta X_{K}  \pm \imag \lambda m \beta / 2) } \\
= \prod_{\delta = \pm} \frac{s(X_{J} + \delta X_{K} \mp (a(m) - a(-1/\lambda m )) \mp 3 \imag \lambda m \beta / 2)}{s(X_{J} + \delta X_{K} \mp(a(m) - a(-1/\lambda m)) \mp \imag \lambda m \beta / 2)} \\
\cdot \Bigl( \frac{s(X_{J} + \delta X_{K} \pm \imag \beta / 2 m ) s( X_{J} + \delta X_{K}  \mp \imag \beta / 2m)}{s(X_{J} + \delta X_{K} \pm \imag \beta / 2m \mp \imag \lambda m \beta) s(X_{J} + \delta X_{K} \mp \imag \beta / 2m \pm \imag \lambda m \beta) }\Bigr)^{\half},
\end{multline*}
respectively. Choosing 
\begin{equation}
\label{eq_a_cond_3}
a(m)-a(-1/\lambda m) = - \imag \lambda m \beta /2 - \imag \beta / 2 m \quad (\forall m\in \Lambda)
\end{equation}
reduces the equations to
\begin{multline*}
\prod_{\delta = \pm} \frac{\phi(X_{J} + \delta X_{K} \mp \imag \beta / 2 m ; m , -1/ \lambda m) \phi( - X_{J} + \delta X_{K} \pm \imag \beta / 2 m ; m , -1/ \lambda m)}{\phi(X_{J} + \delta X_{K} \pm \imag \beta / 2 m ; m , -1/ \lambda m) \phi( -X_{J} + \delta X_{K} \mp \imag \beta / 2 m ; m , -1/ \lambda m)} \\
= \prod_{\delta= \pm} \Bigl(
\frac{s( X_{J} + \delta X_{K} \pm \imag \lambda m \beta/2 )s(X_{J} + \delta X_{K} \mp \imag \lambda m \beta / 2 \pm \imag \beta / m)}{s( X_{J} + \delta X_{K} \mp \imag \lambda m \beta / 2) s( X_{J} + \delta X_{K} \pm \imag \lambda m \beta / 2 \mp \imag \beta / m)}
\Bigr)^{\half}
\end{multline*}
and
\begin{multline*}
\prod_{\delta = \pm} \frac{\phi(X_{J} + \delta X_{K} \pm \imag \lambda m \beta /2 ; m , - 1/\lambda m)\phi(-X_{J} + \delta X_{K} \mp \imag \lambda m \beta / 2 ; m , -1/\lambda m)}{\phi(X_{J} + \delta X_{K} \mp \imag \lambda m \beta / 2 ; m , -1/\lambda m) \phi(-X_{J} + \delta X_{K}  \pm \imag \lambda m \beta / 2) } \\
= \prod_{\delta = \pm} \Bigl( \frac{ s( X_{J} + \delta X_{K}  \mp \imag \beta / 2m) s(X_{J} + \delta X_{K} \pm \imag \beta / 2m \mp \imag \lambda m \beta)}{ s(X_{J} + \delta X_{K} \pm \imag \beta / 2 m )  s(X_{J} + \delta X_{K} \mp \imag \beta / 2m \pm \imag \lambda m \beta) }\Bigr)^{\half},
\end{multline*}
which have a common solution given by $\phi(x ; m , -1 / \lambda m )$ in \eqref{eq_Sen_function_pair}.

We then find that the function 
\begin{equation}
a(m) = - \frac{\imag \lambda \beta}{4} ( m + 1/\lambda m) + a_{0}
\label{eq_a_function}
\end{equation}
satisfies the conditions \eqref{eq_a_cond_1}, \eqref{eq_a_cond_2}, and \eqref{eq_a_cond_3} for any constant $a_0 \in\C$. In the following we set this constant to zero, for simplicity.

Turning our attention to the $\psi(x ; m )$ functions, we find that it should satisfy
\begin{multline}\label{eq_proof_relation_4}
\Bigl( \frac{\prod_{\nu=0}^{2\rho +1 } s( \pm X + \imag \beta / 2m  - \imag d(g_{\nu},m) \beta ) s(\mp X_{J} + \imag \beta / 2 m - \imag d(g_{\nu},m) \beta)}{s( \pm 2  X + \imag \beta / m) s( \pm 2 X) s(\mp 2 X + \imag \beta/m ) s( \mp 2 X)} \Bigr)^{\half} \frac{\psi( X \mp \imag \beta/2 m ; m )}{\psi( X \pm \imag \beta / 2 m ; m)}\\
= C \prod_{\nu=0}^{\rho} \frac{s(\pm X + \imag \beta / 2 m - \imag \lambda m \beta / 2 - \half \omega_{\nu}) }{ s(\pm X + \imag \beta / 2 m - \half \omega_{\nu}) }\frac{s( \pm X + \imag \beta / 4 m - \imag \lambda m \beta /4 - c_{\nu} -\half \omega_{\nu} - n_{\nu})}{s( \pm X + \imag \beta / 4 m -\imag \lambda m \beta / 4 - c_{\nu} - \half \omega_{\nu}) } \\ \cdot \frac{ s( \pm X + \imag \beta / 4 m -\imag \lambda m \beta /4  - d_{\nu} - \half \omega_{\nu} - n_{\nu+\rho+1})}{s(\pm X + \imag \beta / 4 m  - \imag \lambda m \beta / 4 -d_{\nu} - \half \omega_{r}) }
\end{multline}
for some constant $C\in\C^{\ast}$ and all $m \in \Lambda$. (In the formula we replaced $m_{J}$ with $m$, $X_{J}$ with $X \pm \imag \beta / 2 m$ and $a_{J} = a(m_{J})$ in \eqref{eq_a_function} with $a_{0}=0$.) Choosing
$$
c_{\nu} = \frac{\imag \beta}{4}(\lambda -1) , \quad d_{\nu} = - \frac{\imag \beta}{4}(\lambda -1 ) \quad (\nu=0,\ldots,\rho)
$$
we find that 
$$
+ \imag \beta / 2 m - \imag \lambda m \beta / 2 = \begin{cases} + \imag \beta / 4 m -\imag \lambda m \beta / 4 - c_{\nu} \quad &\text{if } m=1 \text{ or } m=-1/\lambda \\
+ \imag \beta / 4 m -\imag \lambda m \beta /4  - d_{\nu} \quad &\text{if } m= -1 \text{ or } m= 1/\lambda 
\end{cases}
$$
and that the r.h.s. of \eqref{eq_proof_relation_4} can be written as
$$
C  \prod_{\nu=0}^{\rho} \frac{s(\pm X - \imag \quarter \lambda [m+1] \beta + \imag\quarter [1 + 1/m] \beta - \half\omega_{\nu} - n_{\nu}) s(\pm X - \imag \quarter \lambda [ m - 1]\beta - \imag \quarter [1 - 1/m] \beta- \half \omega_{\nu} - n_{\nu+\rho+1}) }{s(\pm X - \half\omega_{\nu}) s(\pm X - \half\omega_{\nu} + \imag \beta / 2 m)}.
$$
(Note that if the constant $a_{0}$ in \eqref{eq_a_function} is non-zero, then this result still holds if we shift $c_{\nu}$ and $d_{\nu}$ by $a_{0}$.) Recalling $d(g,m)$ in \eqref{eq_Sen_parameters} and noting that the duplication formula for $s(x)$ can be expressed as
\begin{equation}\label{eq_double_angle_formula}
s( 2 x ) = 2 \prod_{\nu=0}^{\rho} s(x - \half \omega_{\nu}) / \prod_{\nu=1}^{\rho} s(-\half \omega_{\nu}),
\end{equation}
we can express the square-root term in the l.h.s. of \eqref{eq_proof_relation_4} as
\begin{multline*}
\Bigl( \frac{\prod_{\nu=0}^{2\rho +1 } s( \pm X + \imag \beta / 2m  - \imag d(g_{\nu},m) \beta ) s(\mp X_{J} + \imag \beta / 2 m - \imag d(g_{\nu},m) \beta)}{s( \pm 2  X + \imag \beta / m) s( \pm 2 X) s(\mp 2 X + \imag \beta/m ) s( \mp 2 X)} \Bigr)^{\half}
= \frac{\prod_{\nu=1}^{\rho} s(\half \omega_{\nu})^2}{4}  \\ \cdot \Bigl( \frac{\prod_{\nu=0}^{2\rho +1}s( \pm X - \imag \quarter \lambda [m-1] \beta + \imag \quarter [1 + 1/m] \beta - \imag g_{\nu} \beta)s( \mp X  - \imag \quarter \lambda [m-1] \beta + \imag \quarter [1 + 1/m] \beta - \imag g_{\nu} \beta)}{\prod_{\nu=0}^{\rho} s(\pm X + \imag \beta /2 m - \half \omega_{\nu})s(\pm X - \half \omega_{\nu}) s(\mp X + \imag \beta / 2 m - \half \omega_{\nu}) s(\mp X - \half \omega_{\nu})}
\Bigr)^{\half}.
\end{multline*}
We then find that choosing $C =(1/4) \prod_{\nu=1}^{\rho}s(\omega_{\nu}/2)^2$ and 
$$
n_{\nu} = -\imag \lambda \beta / 2 - \half \omega_{\nu} + \imag g_{\nu} \beta , \quad n_{\nu+\rho+1} = -\imag \beta / 2 - \half \omega_{\nu} + \imag g_{\nu+\rho+1} \beta \quad (\nu=0,\ldots , \rho)
$$
simplifies the relation in \eqref{eq_proof_relation_4} to
\begin{multline*}
\frac{\psi( X \mp \imag \beta/2 m ; m )}{\psi( X \pm \imag \beta / 2 m ; m)} = \Bigl( \prod_{\nu=0}^{\rho} \frac{s(\mp X -\half \omega_{\nu}) s(\mp X + \imag \beta / 2 m -\half \omega_{\nu})}{s( \pm X - \half \omega_{\nu}) s(\pm X + \imag \beta / 2 m -\half \omega_{\nu})} \Bigr. \\ \cdot
\Bigl. \prod_{\nu=0}^{2\rho+1} \frac{s(\pm X - \imag \quarter \lambda [ m - 1] \beta + \imag \quarter [1 + 1/m] \beta - \imag g_{\nu} \beta)}{s(\mp X - \imag \quarter \lambda[m-1] \beta + \imag \quarter [1+1/m] \beta - \imag g_{\nu} \beta)} \Bigr)^{\half}
\\
= \Bigl( \frac{s(\mp 2 X ) s(\mp 2 X + \imag \beta / m)}{s(\pm 2 X ) s(\pm 2 X + \imag \beta / m)} \prod_{\nu=0}^{2\rho+1} \frac{s(\pm X + \imag \beta / 2 m- \imag d(g_{\nu},m) \beta )}{s(\mp X +\imag \beta / 2m - \imag d(g_{\nu},m)\beta )} \Bigr)^{\half}
\end{multline*}
which have the common solution $\psi(x ; m)$ given in \eqref{eq_Sen_function_single}. (The last line is obtained by using \eqref{eq_Sen_parameters} and \eqref{eq_double_angle_formula}.)

Combining the results above, we use Lemma~\ref{lemma_key_lemma} in order to express
\begin{multline}\label{eq_proof_relation_5}
\cH_{\cN}(\bs X ; \bs m) \Phi(\bs X ; \bs m) = \Bigl[ \cV^{0}(\bs X ; \bs m) + \frac{\prod_{\nu=1}^{\rho}s(\half \omega_{\nu})^{2} }{4} \Bigl( s( 2 \imag \lambda \beta \sum_{J=1}^{\cN} m_{J} + \sum_{\nu=0}^{\rho}[ \imag g_{\nu} \beta + \imag g_{\nu+\rho+1} \beta- \imag (\lambda + 1 ) \beta / 2 - \omega_{\nu}]) \Bigr. \Bigr. 
\\
+ \sum_{\nu=0}^{\rho} \bigl\{ \bigr. \frac{\prod_{\mu=0}^{\rho} s( \half \omega_{\nu} + \imag \lambda \beta /2 - \imag g_{\mu} \beta) }{\prod_{\mu\neq\nu}^{\rho}s( \half  ( \omega_{\nu} - \omega_{\mu}))} (\prod_{\mu=0}^{\rho}\frac{s(\half \omega_{\nu} + \imag \lambda \beta / 2 - \imag g_{\mu+\rho+1} \beta) }{s(\half( \omega_{\nu} - \omega_{\mu} + \imag (\lambda - 1 ) \beta))} )
\\ \cdot
\prod_{\delta = \pm}	\prod_{J=1}^{N} \frac{s(\delta X_{J} + \half \omega_{\nu} + \imag \quarter \lambda [m_{J} +1] \beta - \imag \quarter [ 1 - 1/m_{J}] \beta - \imag \lambda m_{J} \beta) }{s(\delta X_{J} + \half \omega_{\nu} + \imag \quarter \lambda [m_{J} +1] \beta - \imag \quarter [ 1 - 1/m_{J}] \beta)} \\
	+ 	
	(\prod_{\mu=0}^{\rho} \frac{s(\half \omega_{\nu} + \imag \beta / 2 - \imag g_{\mu} \beta)}{s(\half ( \omega_{\nu} - \omega_{\mu} +(1-\lambda)\beta ))}) \frac{\prod_{\mu=0}^{\rho}s(\half \omega_{\nu} + \imag \beta /2 - \imag g_{\mu+\rho+1} \beta) }{ \prod_{\mu\neq\nu}^{\rho}s(\half ( \omega_{\nu}- \omega_{\mu}))} \\
	\cdot \Bigl. \Bigl.  \prod_{\delta = \pm} \prod_{J=1}^{\cN} \frac{s(\delta X_{J} + \half \omega_{\nu} +\imag \quarter \lambda [m_{J} -1] \beta + \imag \quarter [1 + 1/m_{J}] \beta - \imag \lambda m_{J} \beta) }{s(\delta X_{J} + \half \omega_{\nu} + \imag \quarter \lambda [m_{J} -1] \beta + \imag \quarter [1 + 1/m_{J}] \beta) } \bigl. \bigr\} \Bigr) \Bigr] \Phi( \bs X ; \bs m)
\end{multline}
in the rational (I), trigonometric (II), and hyperbolic (III) cases, while in the elliptic case \eqref{eq_proof_relation_5} holds only if the parameters satisfy 
$$
2 \lambda  \sum_{J=1}^{\cN} m_{J} +   \sum_{\nu=0}^{7} g_{\nu}  - 2   ( \lambda + 1 ) = 0 \quad \text{(IV)}.
$$
Recalling the coefficients $\cV^{0}$ \eqref{eq_Sen_coeff_potential}, it becomes clear that the $\bs{X}$-dependent part in the square brackets cancel and we obtain the eigenvalue identity in Theorem~\ref{thm_source_identity}.

To summarize: We showed that \eqref{eq_proof_eigenvalue_identity}, with $\psi(x;m)$ in \eqref{eq_Sen_function_single} and $\phi(x ; m , m')$ in \eqref{eq_Sen_function_pair}, is identical with \eqref{eq_Key_eq} multiplied by an overall factor $(1/4)\prod_{\nu=1}^{\rho} s(\omega_{\nu} /2)^2$ for constants
\begin{equation}\label{eq_proof_constants_2}
\begin{aligned}
c_{\nu} = a_{0} + \frac{\imag \beta}{4}(\lambda -1 ), &\quad d_{\nu} = a_{0} - \frac{\imag \beta}{4}(\lambda -1) , \\ n_{\nu} = - \frac{\imag \lambda \beta}{2} - \frac{\omega_{\nu}}{2} + \imag g_{\nu} \beta , &\quad n_{\nu+\rho+1} = -\frac{\imag \beta}{2} - \frac{\omega_{\nu}}{2} + \imag g_{\nu + \rho +1} \beta,
\end{aligned} \quad (\nu=0,\ldots , \rho)
\end{equation}
and 
$$
a_{J} = - \frac{\imag \lambda \beta }{4}( m_{J} + 1/ \lambda m_{J}) + a_{0} \quad ( J=1,\ldots , \cN)
$$
where $a_{0} \in \C$ is any arbitrary constant. 
Note that the balancing condition \eqref{eq_balancing_condition} is the same as \eqref{eq_lemma_balancing_condition} with parameters in \eqref{eq_proof_constants_2}.
\end{proof}

\subsection{Proof of Lemma~\ref{lemma_Sen_conjugate_operator}}\label{section_proof_of_lemma}
\begin{proof}
From our proof above it is straightforward to deduce that
\begin{multline*}
\Phi( \bs X ; \bs m)^{-1} \circ \cH_{\cN}(\bs X ; \bs m ) \circ \Phi(\bs X ; \bs m) = \cV^{0}(\bs X ; \bs m) + \sum_{\ve = \pm} \sum_{J=1}^{\cN} s( \imag \lambda m_{J} \beta) 
\\ \cdot
 \Bigl( \frac{\prod_{\nu=0}^{2\rho +1 } s( \ve X_{J} - \imag d_{\nu,J} \beta ) s(-\ve X_{J} + \imag \beta /m_{J} - \imag d_{\nu,J}\beta)}{s(2\ve X_{J}) s(2 \ve X_{J} - \imag \beta/ m_{J}) s(-2 \ve X_{J} + 2 \imag \beta/m_{J}) s(-2\ve X_{J} + \imag \beta / m_{J})} \Bigr)^{\half} \frac{\psi( X_{J} - \ve \imag \beta/m_{J} ; m_{J})}{\psi( X_{J} ; m_{J})}
\\ \cdot
\prod_{\delta = \pm} \prod_{K\neq J}^{\cN} f_{\ve}(X_{J} + \delta X_{K} ; m_{J} ,m_{K}) f_{-\ve}(X_{J} + \delta X_{K} - \ve \imag \beta / m_{J} ; m_{J} , m_{K}) \\
\prod_{K \less J} \frac{\phi( X_{J} + \delta X_{K} - \ve \imag \beta / m_{J} ; m_{J},m_{K})\phi( -X_{J} + \delta X_{K} + \ve \imag \beta / m_{J} ; m_{J},m_{K})}{\phi( X_{J} + \delta X_{K} ; m_{J} , m_{K}) \phi( X_{J} + \delta X_{K} ; m_{J},m_{K})} 
\\ \cdot
\prod_{K \gtr J} \frac{\phi( X_{J} + \delta X_{K} - \ve \imag \beta / m_{J} ; m_{K},m_{J})\phi( -X_{J} + \delta X_{K} + \ve \imag \beta / m_{J} ; m_{K},m_{J})}{\phi( X_{J} + \delta X_{K} ; m_{K} , m_{J}) \phi( X_{J} + \delta X_{K} ; m_{K},m_{J})} \exp\Bigl( - \ve \imag \frac{\beta}{m_{J}}\frac{\partial}{\partial X_{J}}\Bigr)
\end{multline*}
equals
\begin{multline*}
\cV^{0}(\bs X ; \bs m) + \sum_{\ve = \pm} \sum_{J=1}^{\cN} s( \imag \lambda m_{J} \beta) 
\frac{\prod_{\nu=0}^{2\rho +1 } s( \ve X_{J} - \imag d_{\nu,J} \beta )}{s(2\ve X_{J}) s(2 \ve X_{J} - \imag \beta/ m_{J})}
\\ \cdot
 \prod_{\delta = \pm} \prod_{K\neq J}^{\cN}  \frac{s(X_{J} + \delta X_{K} + \ve[ a(m_{J}) - a(m_{K})] - \ve\imag \lambda m_{K} \beta)}{s(X_{J} + \delta X_{K} + \ve[ a(m_{J}) - a(m_{K})] )} \exp\Bigl( - \ve \imag \frac{\beta}{m_{J}}\frac{\partial}{\partial X_{J}}\Bigr),
\end{multline*}
with $a(m)$ in \eqref{eq_a_function}. Recalling the function $f_{\pm}(x ; m ,m')$ in \eqref{eq_Sen_coeff_pair} shows that this term equals \eqref{eq_lemma_1}. The last relation is obtained from Lemma~\ref{lemma_key_lemma} using \eqref{eq_proof_constants_2} which gives that 
\begin{multline*}
\sum_{\ve = \pm} \sum_{J=1}^{\cN} s( \imag \lambda m_{J} \beta) 
\frac{\prod_{\nu=0}^{2\rho +1 } s( \ve X_{J} - \imag d_{\nu,J} \beta )}{s(2\ve X_{J}) s(2 \ve X_{J} - \imag \beta/ m_{J})}
 \prod_{\delta = \pm} \prod_{K\neq J}^{\cN} \frac{s(X_{J} + \delta X_{K} + \ve[ a(m_{J}) - a(m_{K})] - \ve\imag \lambda m_{K} \beta)}{s(X_{J} + \delta X_{K} + \ve[ a(m_{J}) - a(m_{K})] )} \\
+ \cV^{0}(\bs X ; \bs m)  = \frac{\prod_{\nu=1}^{\rho}s(\half \omega_{\nu})}{4} s( \imag \beta [ 2 \lambda \sum_{J=1}^{\cN} m_{J} + \sum_{\nu=0}^{2\rho +1} g_{\nu} - \frac{1}{2} (\rho + 1)( \lambda + 1) ] - \sum_{\nu=0}^{\rho} \omega_{\nu})
\end{multline*}
in the rational (I), trigonometric (II), and hyperbolic (III) cases, while in the elliptic (IV) case it holds if and only if the parameters satisfy \eqref{eq_balancing_condition}.
\end{proof}
\section{Special cases}\label{section_special_cases}

It is worth noting that there are several interesting eigenvalue identities and kernel function identities that can be obtained from the source identity in Theorem~\ref{thm_source_identity}.

In the special case where $\cN = N \in\Z_{\gtr 0}$ and $(m_{J},X_{J}) = (1 , x_{J})$, for all $J = 1 ,\ldots , N$, we find that the operator in \eqref{eq_Sen_operator} reduces to $H_{N}(\bs x ; \bs g ,\lambda , \beta ) - c^{0}$ where $H_{N}$ is the Koornwinder-van Diejen operator, that is
$$
H_{N}(\bs x ; \bs g , \lambda , \beta ) = s(\imag g \beta ) \sum_{\ve=\pm} \sum_{j=1}^{N} V_{j}^{\ve}(\bs x ; \bs g , \lambda ,\beta )^{1/2} 
 \e^{- \ve \imag \beta \frac{\partial}{\partial x_{j}}} V_{j}^{-\ve}(\bs x ; \bs g , \lambda ,\beta )^{1/2}  + V^{0}(\bs x;\bs g , \lambda , \beta) 
$$
with coefficients
$$
V^{\pm}_{j}(\bs x ; \bs g , \lambda , \beta ) =\frac{\prod_{\nu=0}^{2\rho +1 } s(\pm x_{j} - \imag g_{\nu}\beta)}{s(\pm2  x_{j}) s(\pm2 x_{j} - \imag \beta)} \prod_{\delta = \pm} \prod_{k\neq j}^{N} \frac{s( x_{j} + \delta x_{k} \mp \imag \lambda \beta)}{s(x_{j} + \delta x_{k})} 
$$
and
\begin{multline}
V^{0}(\bs x ; \bs g , \lambda , \beta ) = - \frac{\prod_{\nu= 1}^{\rho} s(\half\omega_{\nu})^2}{4}\sum_{\nu=0}^{\rho} \frac{\e^{- r  \xi_{\nu} ( 2\lambda N + \abs{\bs g} - \half (\rho + 1 ) ( \lambda + 1))\beta}}{\prod_{\mu\neq \nu}^{\rho}s(\half (\omega_{\nu}-\omega_{\mu}))} \frac{\prod_{\mu=0}^{2\rho +1 } s( \half \omega_{\nu} + \imag \beta / 2 - \imag g_{\mu} \beta)}{\prod_{\mu=0}^{\rho} s(\half (\omega_{\nu} - \omega_{\mu} + \imag ( 1 - \lambda ) \beta )) }\\ \cdot
 \prod_{\delta = \pm} \prod_{j=1}^{N} \frac{s( \delta x_{j} + \half \omega_{\nu} + \imag \beta /2 - \imag \lambda \beta )}{s(\delta x_{j} + \half \omega_{\nu} + \imag \beta / 2)},
\label{eq_van_Diejen_potential}
\end{multline}
and $c^{0} \in \C$ is given by
\begin{equation}
c^{0} = \frac{\prod_{\nu= 1}^{\rho} s(\half\omega_{\nu})^2}{4}\sum_{\nu=0}^{\rho} \frac{\e^{- r  \xi_{\nu} (\abs{\bs g} - \half (\rho + 1 ) ( \lambda + 1))\beta}}{\prod_{\mu\neq \nu}^{\rho}s(\half (\omega_{\nu}-\omega_{\mu}))} \frac{\prod_{\mu=0}^{2\rho +1 } s( \half \omega_{\nu} + \imag \lambda  \beta / 2 - \imag g_{\mu} \beta)}{\prod_{\mu=0}^{\rho} s(\half (\omega_{\nu} - \omega_{\mu} + \imag ( \lambda -  1) \beta )) }.
\label{eq_GS_constant}\end{equation}
The source identity \eqref{eq_Source_identity} reduces then to the following eigenvalue equation:
\begin{corollary}\label{cor_vD_GS}
The function
\begin{multline}\label{eq_GS_eigenfunction}
\Psi_{N}(\bs x ; \bs g , \lambda , \beta ) = \prod_{j=1}^{N} \Bigl(\frac{G(2 x_{j} + \imag \beta / 2 ; \beta)G(-2 x_{j} + \imag \beta / 2 ; \beta)}{\prod_{\nu=0}^{2\rho + 1} G(x_{j} + \imag \beta / 2 - \imag g_{\nu}\beta; \beta) G( - x_{j} + \imag \beta / 2 - \imag g_{\nu} \beta ; \beta)} \Bigr)^{1/2} \\ \cdot
 \prod_{1 \leq j \less k\leq N} \prod_{\ve , \ve'=\pm} \Bigl( \frac{G( \ve x_{j} + \ve' x_{k} + \imag \beta / 2 ; \beta)}{G(\ve x_{j} + \ve' x_{k}- \imag \lambda \beta +\imag \beta / 2 ; \beta)} \Bigr)^{1/2}
\end{multline}
satisfies the eigenvalue equation
$$
\bigl(H_{N}(\bs x; \bs g , \lambda , \beta ) - \cE_{N}(\bs g , \lambda , \beta) \bigr) \psi_{N}(\bs x ; \bs g , \lambda , \beta ) = 0
$$
for
$$
\cE_{N}(\bs g , \lambda , \beta) = c^{0} + \frac{\prod_{\nu=1}^{\rho}s(\half \omega_{\nu})^{2}}{4} s( \imag \beta [ 2 \lambda N + \abs{\bs g} - \frac{1}{2}(\rho+1)(\lambda + 1 )] - \abs{\bs\omega}) ,
$$
with $c^{0}$ in \eqref{eq_GS_constant}, in the rational \emph{(I)}, trigonometric \emph{(II)}, and hyperbolic \emph{(III)} cases for all parameters, and for 
$$
2 \lambda (N - 1) + \sum_{\nu=0}^{7} g_{\nu} - 2 = 0 \quad \text{\emph{(IV)}}
$$
in the elliptic \emph{(IV) }case.
\end{corollary}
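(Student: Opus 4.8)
The plan is to obtain Corollary~\ref{cor_vD_GS} directly from the source identity in Theorem~\ref{thm_source_identity}, specialized to $\cN=N$ and $(m_J,X_J)=(1,x_J)$ for all $J=1,\ldots,N$, and then to check that the operator, the eigenfunction, and the eigenvalue each reduce to the objects in the statement. First I would verify the reduction of the building blocks. Since $d(g_\nu,1)=g_\nu$ by \eqref{eq_Sen_parameters}, $\beta/m_J=\beta$, and $f_\pm(x;1,1)=s(x\mp\imag\lambda\beta)/s(x)$ by \eqref{eq_Sen_coeff_pair}, the shift coefficients $\cV_J^\pm(\bs X;\bs m)$ in \eqref{eq_Sen_coeff_shift} collapse term by term to the van Diejen coefficients $V_j^\pm(\bs x;\bs g,\lambda,\beta)$, while $s(\imag\lambda m_J\beta)=s(\imag\lambda\beta)$ matches the prefactor of $H_N$. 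The same specialization turns $\psi(x;1)$ in \eqref{eq_Sen_function_single} and $\phi(x;1,1)$ in \eqref{eq_Sen_function_pair} into exactly the single-variable and pair factors of $\Psi_N$ in \eqref{eq_GS_eigenfunction}, so that $\Phi(\bs X;\bs m)$ becomes $\Psi_N(\bs x;\bs g,\lambda,\beta)$.

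The crux of the argument is the reduction of the potential $\cV^0$ in \eqref{eq_Sen_coeff_potential} to $V^0(\bs x;\bs g,\lambda,\beta)-c^0$. Setting $m_J=1$ makes the two $\quarter\imag[\cdots]\beta$ arguments equal to $\half\imag\beta$ and $\half\imag\lambda\beta$ respectively. The first of the two terms in \eqref{eq_Sen_coeff_potential} then reproduces $V^0$ in \eqref{eq_van_Diejen_potential} verbatim once $\sum_J m_J=N$ is inserted into the exponential prefactor. The second term, however, still carries the $\bs x$-dependent product $\prod_{\delta=\pm}\prod_{k=1}^{N}\frac{s(\delta x_k+\half\omega_\nu-\half\imag\lambda\beta)}{s(\delta x_k+\half\omega_\nu+\half\imag\lambda\beta)}$, so at first glance it is not constant. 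The key observation --- and the step I expect to be the main obstacle --- is that this product collapses to a pure exponential. Using that $s$ is odd and then applying the quasi-periodicity \eqref{eq_quasi_periodicity} to trade the $+\half\omega_\nu$ arguments for $-\half\omega_\nu$ arguments, each $k$-factor reduces to $\exp(2r\xi_\nu\lambda\beta)$: the $s$-functions cancel in pairs and the signs $\epsilon_\nu$ drop out, so the full product over $k$ equals $\exp(2r\xi_\nu\lambda N\beta)$ (the case $\nu=0$ being trivial since $\omega_0=0$ and $\xi_0=0$). This exactly offsets the $2\lambda N$ contribution inside the exponential prefactor $\exp(-r\xi_\nu(2\lambda N+\abs{\bs g}-\half(\rho+1)(\lambda+1))\beta)$, leaving precisely the constant $c^0$ of \eqref{eq_GS_constant} but with the opposite overall sign. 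Hence $\cV^0(\bs x;\bs 1)=V^0(\bs x;\bs g,\lambda,\beta)-c^0$, and therefore $\cH_N(\bs x;\bs 1)=H_N(\bs x;\bs g,\lambda,\beta)-c^0$.

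Finally I would read off the eigenvalue. Specializing $\abs{\bs m}=N$ in the source eigenvalue of \eqref{eq_Source_identity} and combining it with the operator reduction gives $(H_N-c^0-E)\Psi_N=0$ with $E=\tfrac14\prod_{\nu=1}^\rho s(\half\omega_\nu)^2\,s(\imag\beta[2\lambda N+\abs{\bs g}-\half(\rho+1)(\lambda+1)]-\abs{\bs\omega})$, so that the eigenvalue is $\cE_N=c^0+E$ as claimed. In the elliptic (IV) case the balancing condition \eqref{eq_balancing_condition} with $\abs{\bs m}=N$ reads $2\lambda N+\sum_{\nu=0}^7 g_\nu-2(\lambda+1)=0$, which is identical to the stated condition $2\lambda(N-1)+\sum_{\nu=0}^7 g_\nu-2=0$, while in the cases (I)--(III) no condition is required. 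This completes the deduction.
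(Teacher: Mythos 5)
Your proposal is correct and follows essentially the same route as the paper, which obtains Corollary~\ref{cor_vD_GS} by exactly this specialization $\cN=N$, $(m_J,X_J)=(1,x_J)$ of Theorem~\ref{thm_source_identity}; your verification that the second term of $\cV^{0}$ collapses via the quasi-periodicity \eqref{eq_quasi_periodicity} to the constant $-c^{0}$ is precisely the computation the paper leaves implicit here (and spells out only in the analogous proof of Corollary~\ref{cor_vD_Cauchy_KFI}).
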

The function $\Psi_{N}(\bs x ; \bs g , \lambda , \beta)$ is the groundstate eigenfunction for the operator $H_{N}(\bs x ; \bs g , \lambda , \beta)$ and by Lemma~\ref{lemma_Sen_conjugate_operator}, it is the square-root of the weight function with respect to which the Koornwinder-van Diejen operators $A_{N}(\bs x ; \bs g , \lambda ,\beta )$ are symmetric.

In the special case where $\cN = N + M$ ($N,M\in\Z_{\gtr 0}$) and 
\begin{equation}\label{eq_vD_KF_case}
(m_{J} , X_{J} ) = \begin{dcases} ( 1 , x_{J} ) \quad &\text{for } J=1 ,\ldots N \\
(-1 , y_{J-N})  \quad &\text{for } J-N=1 ,\ldots M
\end{dcases},
\end{equation}
we find that the operator in \eqref{eq_Sen_operator} reduces to a sum of two Koornwinder-van Diejen type operators. The source identity in this case yields a well-known kernel function identity for two pairs of Koornwinder-van Diejen type difference operators:
\begin{corollary}\label{cor_vD_Cauchy_KFI}
The function
\begin{equation}\label{eq_vD_KF}
F_{N,M}(\bs x ,\bs y ; \bs g , \lambda , \beta ) = \Psi_{N}(\bs x ; \bs g , \lambda \beta ) \Psi_{M}(\bs y ; \bs{\tilde{g}} , \lambda,  \beta) \prod_{j=1}^{N} \prod_{k=1}^{M} \prod_{\ve , \ve' = \pm} G(\ve x_{j} + \ve' y_{k} - \imag \lambda \beta /2 ; \beta),
\end{equation}
where $\tilde{g}_{\nu} = (\lambda + 1 )/2 - g_{\nu}$ $(\nu=0,\ldots, 2\rho +1)$, satisfies
$$
\Bigl( H_{N}(\bs x ; \bs g , \lambda , \beta ) - H_{M}(\bs y ; \bs{\tilde{g}} , \lambda , \beta ) - C_{N,M} \Bigr) F_{N,M}(\bs x , \bs y ; \bs g , \lambda ,\beta) = 0
$$
with
$$
C_{N,M} = \frac{\prod_{\nu= 1}^{\rho} s(\half\omega_{\nu})^2}{4} s( \imag \beta [ 2 \lambda ( N - M ) + \abs{\bs{g}} - \frac{1}{2} (\rho +1 ) ( \lambda + 1) ] - \abs{\bs\omega}),
$$
for arbitrary parameters in the rational \emph{(I)}, trigonometric \emph{(II)}, and hyperbolic \emph{(III)} cases, and for 
$$
2 \lambda (N - M - 1) + \sum_{\nu=0}^{7} g_{\nu} - 2 = 0 \quad \text{\emph{(IV)}}
$$
in the elliptic case.
\end{corollary}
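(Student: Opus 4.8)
The plan is to derive Corollary~\ref{cor_vD_Cauchy_KFI} as the specialization of the source identity in Theorem~\ref{thm_source_identity} to the choice \eqref{eq_vD_KF_case}. First I would set $\cN = N + M$ and $(m_J, X_J) = (1, x_J)$ for $J = 1,\ldots,N$, $(m_J, X_J) = (-1, y_{J-N})$ for $J - N = 1,\ldots,M$, so that $\abs{\bs m} = N - M$. Under this substitution the eigenvalue on the right-hand side of \eqref{eq_Source_identity} is literally $C_{N,M}$, and the elliptic balancing condition \eqref{eq_balancing_condition}, upon inserting $\abs{\bs m} = N-M$ and $\frac{1}{2}(\rho+1)(\lambda+1) = 2(\lambda+1)$ (since $\rho = 3$), becomes $2\lambda(N-M-1) + \sum_{\nu=0}^{7} g_\nu - 2 = 0$, as claimed. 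It therefore remains to prove the two identifications (A) $\Phi(\bs X;\bs m) = F_{N,M}(\bs x,\bs y;\bs g,\lambda,\beta)$ and (B) that $\cH_{\cN}(\bs X;\bs m)$ acts on this function as $H_N(\bs x;\bs g,\lambda,\beta) - H_M(\bs y;\bs{\tilde{g}},\lambda,\beta)$.

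For (A), I would factor $\Phi$ from \eqref{eq_Sen_function} into single-variable and pair pieces and match them against \eqref{eq_vD_KF}. The factor $\psi(x_j;1)$ in \eqref{eq_Sen_function_single} is, using $d(g_\nu,1) = g_\nu$, exactly the single-variable factor of $\Psi_N(\bs x;\bs g)$. For the dual sector $\psi(y_k;-1)$ has $\alpha = \beta/m = -\beta$ and $d(g_\nu,-1) = g_\nu - (\lambda+1)/2$; applying the reflection $G(x;-\alpha) = G(-x;\alpha)$ from \eqref{eq_Gamma_minus_alpha} and setting $\tilde{g}_\nu = (\lambda+1)/2 - g_\nu$ turns it into the single-variable factor of $\Psi_M(\bs y;\bs{\tilde{g}})$. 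The pair pieces follow the same pattern: $\phi(\cdot;1,1)$ and $\phi(\cdot;-1,-1)$ reproduce the pair factors of $\Psi_N$ and $\Psi_M$ respectively (the latter again via \eqref{eq_Gamma_minus_alpha} together with the relabelling $(\ve,\ve')\mapsto(-\ve,-\ve')$ inside $\prod_{\ve,\ve'}$), while the cross factor $\phi(\cdot;1,-1) = G(\cdot - \imag\lambda\beta/2;\beta)$ yields precisely $\prod_{j,k}\prod_{\ve,\ve'} G(\ve x_j + \ve' y_k - \imag\lambda\beta/2;\beta)$. Collecting the pieces gives $\Phi = F_{N,M}$.

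For (B), the subtlety is that $\cH_{\cN}$ does not separate as an operator: its $x$-sector coefficients $\cV_j^\pm$ carry the cross factors $\prod_{k}\prod_{\delta} f_\pm(x_j + \delta y_k;1,-1)$ absent from the undeformed $V_j^\pm(\bs x;\bs g)$. I would handle this by conjugation, writing $F = \Psi_N(\bs x;\bs g)\Psi_M(\bs y;\bs{\tilde{g}})\mathcal{K}$ with $\mathcal{K} = \prod_{j,k}\prod_{\ve,\ve'} G(\ve x_j + \ve' y_k - \imag\lambda\beta/2;\beta)$ and invoking the ground-state conjugation $\Psi_N^{-1} H_N \Psi_N = s(\imag\lambda\beta)\sum_{\ve,j} V_j^\ve(\bs x;\bs g)\exp(-\ve\imag\beta\,\partial_{x_j}) + V^0(\bs x;\bs g)$ (and its analogue for $H_M$), which follows from Corollary~\ref{cor_vD_GS} and Lemma~\ref{lemma_Sen_conjugate_operator}. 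The cross factors are then produced exactly by the ratio $(\exp(-\ve\imag\beta\,\partial_{x_j})\mathcal{K})/\mathcal{K}$: the functional equation \eqref{eq_Gamma_function_equation} converts this quotient of shifted Gamma functions into $\prod_k\prod_\delta f_\ve(x_j + \delta y_k;1,-1)$, so that $V_j^\ve(\bs x;\bs g)\,(\exp(-\ve\imag\beta\,\partial_{x_j})\mathcal{K})/\mathcal{K} = \cV_j^\ve(\bs X;\bs m)$ matches the shift coefficient of the source operator, and symmetrically on the $y$-sector, where the prefactor $s(\imag\lambda m_J\beta) = s(-\imag\lambda\beta) = -s(\imag\lambda\beta)$ and the reversed shift supply the relative minus sign that produces $-H_M(\bs y;\bs{\tilde{g}})$.

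With the shift parts in agreement, what remains is to match the shift-free terms, i.e.\ to show that the potential $\cV^0(\bs X;\bs m)$ in \eqref{eq_Sen_coeff_potential} equals $V^0(\bs x;\bs g) - V^0(\bs y;\bs{\tilde{g}})$; this is the step I expect to be the main obstacle, since each $\nu$-summand of \eqref{eq_Sen_coeff_potential} runs over both sectors at once and its $\bs X$-dependent products must be reorganised into the separate $\bs x$- and $\bs y$-potentials with no leftover constant (contrast the single-sector case, where the analogous reorganisation leaves the nonzero constant $c^0$ of \eqref{eq_GS_constant}). Fortunately this is a computation of exactly the type already performed in the proof of Theorem~\ref{thm_source_identity}—the cancellation of the $\bs X$-dependent part of the bracket in \eqref{eq_proof_relation_5}—evaluated at $m_J \in \{1,-1\}$ using the explicit constants \eqref{eq_proof_constants_2}, together with $\tilde{g}_\nu = (\lambda+1)/2 - g_\nu$; no new analytic input is needed. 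Once (A) and (B) hold, the source identity reads $(H_N(\bs x;\bs g,\lambda,\beta) - H_M(\bs y;\bs{\tilde{g}},\lambda,\beta) - C_{N,M})F_{N,M} = 0$ in cases (I)--(III) without restriction, and in case (IV) under the translated balancing condition, which is the assertion of the corollary.
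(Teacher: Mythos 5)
Your proposal is correct in substance and follows the same overall strategy as the paper — specialize the source identity \eqref{eq_Source_identity} to the choice \eqref{eq_vD_KF_case}, note $\abs{\bs m}=N-M$ to read off $C_{N,M}$ and the translated balancing condition — but you route the operator identification differently. The paper works directly with the symmetric form \eqref{eq_Sen_operator}: since $f_{\pm}(x;m,m')f_{\mp}(x\mp\imag\beta/m;m,m')=1$ when $m'=-m$, the cross factors between the $\bs x$- and $\bs y$-sectors cancel outright in the products $\cV_J^{\ve\,1/2}(\cdots)\cV_J^{-\ve\,1/2}$, so $\cH_{N+M}$ visibly separates into $H_{N}(\bs x;\bs g,\lambda,\beta)+H_{M}(\bs y;-\bs{\tilde g},\lambda,-\beta)$; the minus sign and the parameters $\bs{\tilde g}$ then come from the anti-symmetry $H_{M}(\bs y;-\bs{\tilde g},\lambda,-\beta)=-H_{M}(\bs y;\bs{\tilde g},\lambda,\beta)$ rather than being built in from the start as in your argument. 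You instead conjugate by $\Psi_{N}\Psi_{M}\mathcal{K}$ and match against the reduced form of Lemma~\ref{lemma_Sen_conjugate_operator}, regenerating the cross factors $f_{\ve}(x_j+\delta y_k;1,-1)$ from the shift of the Gamma-function kernel via \eqref{eq_Gamma_function_equation}; this is valid (and your identification of $\phi(\cdot\,;1,-1)$ with the kernel factor, and of $\psi(\cdot\,;-1)$ with the $\bs{\tilde g}$-ground state via \eqref{eq_Gamma_minus_alpha}, checks out), but it is more roundabout than the paper's cancellation and still leaves the potential matching. That last step — showing $\cV^{0}(\bs X;\bs m)=V^{0}(\bs x;\bs g,\lambda,\beta)-V^{0}(\bs y;\bs{\tilde g},\lambda,\beta)$ with no residual constant — is where the paper spends most of its effort: it is a direct evaluation of \eqref{eq_Sen_coeff_potential} at $m_J\in\{1,-1\}$ in which the quasi-periodicity factors $\e^{-2r\xi_{\nu}M\lambda\beta}$ and $\e^{+2r\xi_{\nu}N\lambda\beta}$ generated by \eqref{eq_quasi_periodicity} are essential (in the elliptic case they are what reassemble the correct prefactors of the two potentials). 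You correctly flag this as the main remaining computation and correctly observe that, unlike the single-sector case with its constant $c^{0}$, nothing is left over here; to make the argument complete you would need to carry out that evaluation, as the paper does.
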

\begin{proof}
It is clear that the function in \eqref{eq_Sen_function} reduces to \eqref{eq_vD_KF} in the special case \eqref{eq_vD_KF_case} so we need only to consider the operator \eqref{eq_Sen_operator} in this special case. Noting that the function $f_{\pm}$ satisfy $$f_{\pm}(x ; m , m') f_{\mp}(x \mp \imag \beta / m ; m , m') =1$$ when $m'=-m$, it becomes clear that the coefficients $\cV_{J}^{\pm} (\bs X ; \bs m)$ become
\begin{multline}
\begin{dcases}
\frac{\prod_{\nu=0}^{2 \rho+1} s( \pm x_{J} - \imag g_{\nu} \beta )}{s(\pm 2 x_{J}) s(\pm 2 x_{J}- \imag \beta)} \prod_{\delta = \pm} \prod_{\substack{K=1 \\ K \neq J}}^{N} \frac{s(x_{J} + \delta x_{K} \mp \imag \lambda \beta)}{s(x_{J} + \delta x_{K})} \quad &\text{for } J=1,\ldots,N \\
\frac{\prod_{\nu=0}^{2\rho+1} s(\pm y_{J-N} - \imag (g_{\nu} - (\lambda + 1) /2 )\beta)}{s(\pm 2 y_{J-N}) s( \pm 2 y_{J-N} + \imag \beta )}  \prod_{\delta = \pm} \prod_{\substack{ K = 1 \\ K \neq J-N}}^{M} \frac{s( y_{J-N} + \delta y_{K} \pm \imag \lambda \beta)}{s( y_{J-N} + \delta y_{K})}\quad &\text{for } J-N=1,\ldots,M
\end{dcases},
\end{multline}
and the coefficients $\cV^{0}$ become
\begin{multline*}
- \frac{\prod_{\nu=1}^{\rho} s(\half \omega_{\nu})^{2}}{4} \sum_{\nu=0}^{\rho} \frac{\e^{-r \xi_{\nu}( 2 \lambda  (N-M) + \abs{\bs{g}} - (\rho+1)(\lambda+1)/2)\beta}}{\prod_{\mu \neq \nu}^{\rho} s(\half(\omega_{\nu} - \omega_{\mu}))}\\ \cdot
\Bigl( \e^{- 2 r \xi_{\nu} M  \lambda \beta}  \frac{\prod_{\mu=0}^{2 \rho + 1} s(\half \omega_{\nu} + \imag \beta / 2 - \imag g_{\nu} \beta)}{\prod_{\mu=0}^{\rho} s(\half (\omega_{\nu} - \omega_{\mu} + \imag (1 - \lambda ) \beta))} \prod_{\delta = \pm} \prod_{j=1}^{N} \frac{s(\delta x_{j} + \half \omega_{\nu} + \imag \beta / 2 - \imag \lambda \beta)}{s(\delta x_{j} + \half \omega_{\nu} + \imag \beta / 2)} 
\Bigr. 
\\ \Bigl. +  \e^{+ r \xi_{\nu} 2 N   \lambda \beta} \frac{\prod_{\nu=0}^{2\rho + 1}s (\half \omega_{\nu} + \imag \lambda \beta / 2 - \imag g_{\nu} \beta)}{\prod_{\mu=0}^{\rho} s(\half ( \omega_{\nu} - \omega_{\mu} + \imag (\lambda - 1) \beta ))}  \prod_{\delta = \pm} \prod_{k=1}^{M} \frac{s(\delta y_{k} + \half \omega_{\nu} - \imag \beta / 2 + \imag \lambda \beta)}{s(\delta y_{k} + \half \omega_{\nu} - \imag \beta /2)} 
\Bigr)
\end{multline*}
in the case of \eqref{eq_vD_KF_case}. The new exponential factors comes from the terms 
$$
\e^{-  r \xi_{\nu} 2 M  \lambda \beta} = \prod_{\delta = \pm} \prod_{k=1}^{M} \frac{s(\delta y_{k} + \half \omega_{\nu} + \imag \lambda \beta / 2) }{s(\delta y_{k}+ \half \omega_{\nu} - \imag \lambda \beta / 2)}
$$
and
$$
 \e^{+ r \xi_{\nu} 2 N   \lambda \beta}  = \prod_{\delta = \pm} \prod_{j=1}^{N} \frac{s(\delta x_{j}+ \half \omega_{\nu} - \imag \lambda \beta / 2 )}{s(\delta x_{j} + \half \omega_{\nu} + \imag \lambda \beta / 2)}
$$
by using the properties of the $s$-function, that is using that $s(x)$ is an odd function and the quasi-periodicity in \eqref{eq_quasi_periodicity}. From this we see that the source identity \eqref{eq_Source_identity} becomes
$$
(H_{N}(\bs x ;\bs g , \lambda , \beta) + H_{M}(\bs y ; - \bs{\tilde{g}} , \lambda , -\beta) - C_{N,M}) F_{N,M}(\bs x , \bs y ; \bs g , \lambda , \beta)= 0,
$$
where we use the notation $-\bs{\tilde{g}} = (- \tilde{g}_{0},\ldots,-\tilde{g}_{2\rho +1})$. The final observation is that the Koornwinder-van Diejen operators satisfy $H_{N}(\bs x ; \bs g , \lambda , \beta ) = - H_{N}(\bs x ; -\bs{g},\lambda , -\beta)$ which concludes the proof. (Note that the this anti-symmetry differs from the normal symmetry of the Koornwinder-van Diejen operators due to our different normalization.)
\end{proof}
The kernel function identity in Corollary~\ref{cor_vD_Cauchy_KFI} recovers known results due to Ruijsenaars \cite{Rui09} for $N=M$ and Komori, Noumi, and Shiraishi \cite{KNS09} in the general $N,M$ case.

In the special case where $\cN = N + \tM$ ($N , \tM \in\Z_{\gtr 0}$) and
\begin{equation}
(m_{J} , X_{J}) = \begin{dcases}
(1 , x_{J}) \quad&\text{for } J = 1 ,\ldots N \\
(+1/\lambda , \ty_{J-N}) \quad&\text{for } J-N = 1 ,\ldots \tM
\end{dcases},
\label{eq_vD_dual_Cauchy_case}
\end{equation}
we again find that the operator in \eqref{eq_Sen_operator} becomes the sum of two Koornwinder-van Diejen type operators and the source identity yields another known kernel function identity for two pairs of Koornwinder-van Diejen type difference operators
\begin{corollary}\label{cor_vD_dual_Cauchy_KFI}
The function 
$$
\tilde{F}_{N , \tM}(\bs x , \bs\ty ;  \bs g ,\lambda ,\beta) = \Psi_{N}(\bs x ; \bs g , \lambda , \beta) \Psi_{M}(\bs\ty ;  \lambda^{-1} \bs g , 1/ \lambda , \lambda \beta) \prod_{j=1}^{N} \prod_{\delta = \pm} \prod_{k=1}^{\tM} s(x_{j} + \delta \ty_{k}),
$$
where $ \lambda^{-1}\bs g $ is a short-hand for $(g_{0}/ \lambda , \ldots , g_{2 \rho + 1 }/\lambda)$, satisfies 
$$
\Bigl( H_{N}(\bs x ; \bs g , \lambda , \beta ) + H_{M}(\bs\ty ; \lambda^{-1} \bs g , 1/\lambda , \lambda \beta ) - \tilde{C}_{N,\tM}\Bigr) \tilde{F}_{N,\tM}(\bs x ,\bs\ty ; \bs g , \lambda , \beta) = 0
$$
with
$$
\tilde{C}_{N,\tM} = \frac{\prod_{\nu=1}^{\rho} s(\half\omega_{\nu})^{2}}{4} s( \imag \beta [ 2 \lambda N + 2 \tM + \abs{\bs g} - \frac{1}{2} ( \rho + 1 ) (\lambda + 1)] - \abs{\bs\omega}),
$$
for arbitrary parameters in the rational \emph{(I)}, trigonometric \emph{(II)}, and hyperbolic \emph{(III)} cases, and for 
$$
2 \lambda (N - 1) + 2( \tM - 1) + \sum_{\nu=0}^{7} g_{\nu} = 0 \quad \text{\emph{(IV)}}
$$
in the elliptic \emph{(IV)} case.
\end{corollary}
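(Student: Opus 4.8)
The plan is to specialize the source identity of Theorem~\ref{thm_source_identity} to the masses and variables of \eqref{eq_vD_dual_Cauchy_case}, exactly in the spirit of the proof of Corollary~\ref{cor_vD_Cauchy_KFI}, with $\cN = N + \tM$, the first $N$ slots carrying $(m_J, X_J) = (1, x_J)$ and the remaining $\tM$ slots carrying $(m_J, X_J) = (+1/\lambda, \ty_{J-N})$. First I would check that the eigenfunction $\Phi(\bs X; \bs m)$ of \eqref{eq_Sen_function} collapses to $\tilde{F}_{N,\tM}$. For the single-variable factors \eqref{eq_Sen_function_single} I would use $d(g_\nu, 1) = d(g_\nu, 1/\lambda) = g_\nu$ from \eqref{eq_Sen_parameters} together with $\beta/m_J = \beta$ (resp.\ $\lambda\beta$): the factor $\psi(x_j; 1)$ is then the $\Psi_N$ single factor, while $\psi(\ty_k; 1/\lambda)$ is the single factor of $\Psi_{\tM}(\bs\ty; \lambda^{-1}\bs g, 1/\lambda, \lambda\beta)$ after the substitution $g_\nu \mapsto g_\nu/\lambda$, $\lambda \mapsto 1/\lambda$, $\beta \mapsto \lambda\beta$ (note $\imag(g_\nu/\lambda)(\lambda\beta) = \imag g_\nu\beta$). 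The pair factors \eqref{eq_Sen_function_pair} in the two diagonal blocks $m = m' = 1$ and $m = m' = 1/\lambda$ likewise reproduce the $\Psi_N$ and $\Psi_{\tM}$ pair factors, and in the mixed block $(m, m') = (1, 1/\lambda)$ one has $\phi = s(\,\cdot\,)^{1/2}$, so that $\prod_{\ve, \ve' = \pm} \phi(\ve x_j + \ve' \ty_k; 1, 1/\lambda) = \pm \prod_{\delta = \pm} s(x_j + \delta \ty_k)$ by oddness of $s$, which is the cross factor appearing in $\tilde{F}_{N,\tM}$ up to an inessential constant.

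Next I would reduce the operator $\cH_{\cN}$. The shift part separates into an $x$-block and a $\ty$-block. In the $\ty$-block the prefactor is $s(\imag \lambda m_J \beta) = s(\imag \beta)$ and the shift is $\exp(-\ve \imag \lambda \beta\, \partial / \partial \ty)$ because $\imag \beta / m_J = \imag \lambda \beta$; using $d(g_\nu, 1/\lambda) = g_\nu$ and the within-block pair coefficient $f_\pm(\,\cdot\,; 1/\lambda, 1/\lambda) = s(\,\cdot\, \mp \imag \beta)/s(\,\cdot\,)$ from \eqref{eq_Sen_coeff_pair}, the coefficient $\cV^{\ve}_J$ is precisely $V^{\ve}_k(\bs\ty; \lambda^{-1}\bs g, 1/\lambda, \lambda\beta)$ under the same parameter map. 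The decisive simplification is that the coefficients carry no genuine coupling between the two blocks: since $1/\lambda = 1/(\lambda \cdot 1)$ and $1 = 1/(\lambda \cdot 1/\lambda)$, the relation $f_\pm(x; m, m') f_\mp(x \mp \imag \beta / m; m, m') = 1$ valid for $m' = 1/\lambda m$ (recorded in the proof of Theorem~\ref{thm_source_identity}) applies in both blocks, so in the symmetric form $\cV^{\ve}_J{}^{1/2} \exp(-\ve \imag \beta / m_J\, \partial / \partial X_J) \cV^{-\ve}_J{}^{1/2}$ the mixed factors telescope to $1$ after the shift acts. Hence the shift part of $\cH_{\cN}$ is exactly that of $H_N(\bs x; \bs g, \lambda, \beta) + H_{\tM}(\bs\ty; \lambda^{-1}\bs g, 1/\lambda, \lambda\beta)$; in contrast to Corollary~\ref{cor_vD_Cauchy_KFI}, the sign is already $+$ because $m = +1/\lambda$ leaves both the prefactor $s(\imag\beta)$ and the shift direction positive, so no operator anti-symmetry is invoked.

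The principal obstacle I anticipate is the reorganization of the potential $\cV^0$ of \eqref{eq_Sen_coeff_potential} into $V^0(\bs x; \bs g, \lambda, \beta) + V^0(\bs\ty; \lambda^{-1}\bs g, 1/\lambda, \lambda\beta)$, i.e.\ the same bookkeeping as in Corollary~\ref{cor_vD_Cauchy_KFI}. Inserting \eqref{eq_vD_dual_Cauchy_case}, the first summand-type in \eqref{eq_Sen_coeff_potential} reproduces the $V^0(\bs x)$ product on the $x$-variables ($\lambda(m_J - 1) + 1/m_J + 1 = 2$ at $m_J = 1$) and the second reproduces the $V^0(\bs\ty; \cdots)$ product on the $\ty$-variables ($\lambda(m_K + 1) + 1/m_K - 1 = 2\lambda$ at $m_K = 1/\lambda$), while the scalar ratios $\prod_\mu s(\,\cdot\,)$ match the two van Diejen potentials of \eqref{eq_van_Diejen_potential} under the parameter map. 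The off-diagonal products — the first type over the $\ty$-variables and the second over the $x$-variables — do not vanish but collapse through the identity $\prod_{\delta = \pm} s(\delta X + \half \omega_\nu + b)/s(\delta X + \half \omega_\nu - b) = \epsilon_\nu^2\, \e^{4 \imag r \xi_\nu b}$, which follows from oddness of $s$ and the quasi-periodicity \eqref{eq_quasi_periodicity} because $2 \cdot \half \omega_\nu$ is a full period. With $b = \imag\beta/2$ and $b = \imag\lambda\beta/2$ these contribute the $X$-independent factors $\e^{+2 r \xi_\nu \tM \beta}$ and $\e^{+2 r \xi_\nu N \lambda \beta}$ that convert the overall prefactor of \eqref{eq_Sen_coeff_potential}, which carries $2\lambda \abs{\bs m} = 2\lambda N + 2\tM$, into the two separate prefactors of the $V^0$'s. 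This step is vacuous in cases (I)--(III), where the relevant $\xi_\nu$ vanish, and in the elliptic case (IV) it is governed by the balancing condition \eqref{eq_balancing_condition}, which under \eqref{eq_vD_dual_Cauchy_case} reads $2\lambda(N-1) + 2(\tM - 1) + \sum_{\nu=0}^{7} g_\nu = 0$, matching the stated hypothesis.

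Having established $\cH_{\cN} = H_N(\bs x; \bs g, \lambda, \beta) + H_{\tM}(\bs\ty; \lambda^{-1}\bs g, 1/\lambda, \lambda\beta)$ and $\Phi = \tilde{F}_{N,\tM}$ in this specialization, I would read the eigenvalue directly off \eqref{eq_Source_identity}: with $\abs{\bs m} = N + \tM/\lambda$ it equals $\tfrac{1}{4} \prod_{\nu=1}^{\rho} s(\half\omega_\nu)^2\, s(\imag\beta[2\lambda N + 2\tM + \abs{\bs g} - \half(\rho+1)(\lambda+1)] - \abs{\bs\omega}) = \tilde{C}_{N,\tM}$, which yields $(H_N + H_{\tM} - \tilde{C}_{N,\tM}) \tilde{F}_{N,\tM} = 0$ and completes the proof.
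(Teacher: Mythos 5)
Your proposal is correct and follows essentially the same route the paper intends: the paper states this corollary as a direct specialization of Theorem~\ref{thm_source_identity} via \eqref{eq_vD_dual_Cauchy_case}, and your reduction of $\Phi$, of the shift coefficients (using $f_{\pm}(x;m,m')f_{\mp}(x\mp\imag\beta/m;m,m')=1$ for $m'=+1/\lambda m$), and of $\cV^{0}$ mirrors the paper's written proof of the analogous Corollary~\ref{cor_vD_Cauchy_KFI}, with the correct observation that no operator anti-symmetry is needed here. The only blemish is a bookkeeping slip in the quasi-periodicity step: the off-diagonal ratios have the form $s(\cdot-\imag\beta/2)/s(\cdot+\imag\beta/2)$, i.e.\ $b=-\imag\beta/2$ (resp.\ $b=-\imag\lambda\beta/2$) in your identity, which is what actually produces the quoted factors $\e^{+2r\xi_{\nu}\tM\beta}$ and $\e^{+2r\xi_{\nu}N\lambda\beta}$.
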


In the trigonometric (II) case, Corollary~\ref{cor_vD_dual_Cauchy_KFI} recovers the result of Mimachi \cite{Mim01} while the elliptic generalization of Mimachi's result was given in \cite{KNS09}. 

In the special case where $\cN = N + \tN$ ($N , \tN$) and $\bs X ,\bs m$ in \eqref{eq_deformed_case}, we find that the operator in \eqref{eq_Sen_operator} reduces to a CFSV type generalization of the Koornwinder-van Diejen operators. More specifically, \eqref{eq_Sen_operator} becomes $H_{N,\tN}(\bs x , \bs \tx ; \bs g ,\lambda ,\beta) - c^{0}$ where 
\begin{equation*}
\begin{aligned}
H_{N,\tN}(\bs x , \bs\tx ; \bs g , \lambda ,\beta) =& \sum_{\ve=\pm}  s(\imag \lambda \beta ) \sum_{j=1}^{N} V_{j}^{\ve}(\bs x , \bs \tx)^{1/2} \exp\Bigl({-\ve\imag \beta \frac{\partial}{\partial x_{j}}}\Bigr) V_{j}^{-\ve}(\bs x , \bs \tx)^{1/2} \\
-& s( \imag \beta ) \sum_{k=1}^{\tN} \tilde{V}_{k}^{\ve}(\bs x , \bs \tx )^{1/2} \exp\Bigl({ + \ve \imag \lambda \beta \frac{\partial}{\partial \tx_{k}}}\Bigr) \tilde{V}_{k}^{-\ve}(\bs x , \bs \tx )^{1/2} + V^{0}(\bs x , \bs \tx),
\end{aligned}
\end{equation*}
with coefficients
$$
V^{\pm}_{j} = \frac{\prod_{\nu=0}^{2\rho+1} s( \pm x_{j} - \imag g_{\nu} \beta)}{s(\pm 2 x_{j} ) s(\pm 2 x_{j} - \imag \beta)} \prod_{\delta=\pm} \prod_{j' \neq j}^{N} \frac{s( x_{j} + \delta x_{j'} \mp \imag \lambda \beta)}{s( x_{j} + \delta x_{j'})} \prod_{k=1}^{\tN} \frac{s( x_{j} + \delta \tx_{k} \mp \imag ( \lambda - 1 ) \beta / 2) }{s( x_{j} + \delta \tx_{k} \mp \imag ( \lambda + 1 ) \beta / 2) },
$$
$$
\tilde{V}_{k}^{\pm} = \frac{\prod_{\nu=0}^{2 \rho +1 } s( \pm \tx_{k} + \imag \tilde{g}_{\nu} \beta )}{s(\pm 2 \tx_{k}) s(\pm 2 \tx_{k} + \imag \lambda \beta)} \prod_{\delta = \pm} \prod_{k' \neq k}^{\tN} \frac{s(\tx_{k} + \delta \tx_{k'} \pm \imag \beta)}{s(\tx_{k} + \delta \tx_{k'})} \prod_{j=1}^{N} \frac{s( \tx_{k} + \delta x_{j} \mp \imag (\lambda - 1 ) \beta / 2) }{s( \tx_{k} + \delta x_{j} \pm \imag (\lambda + 1 ) \beta / 2) },
$$
where $\tilde{g}_{\nu} = (\lambda + 1)/2 - g_{\nu}$ for all $\nu=0,\ldots,2\rho+1$, and
\begin{multline}\label{eq_deformed_operator_potential}
V^{0}(\bs x , \bs\tx) = - \frac{\prod_{\nu=1}^{\rho} s(\half \omega_{\nu})^{2}}{4} \sum_{\nu=0}^{\rho} \frac{\e^{- r \xi_{\nu} ( 2 \lambda N - 2 \tN + \abs{\bs{g}} - \half (\rho + 1 ) (\lambda + 1) ) \beta }}{\prod_{\mu \neq \nu}^{\rho} s(\half(\omega_{\nu} - \omega_{\mu}))} \frac{\prod_{\mu=0}^{2 \rho + 1}s(\half \omega_{\nu} + \imag \beta /2 - \imag g_{\mu} \beta )}{\prod_{\mu=0}^{\rho} s(\half ( \omega_{\nu} - \omega_{\mu} + \imag ( 1 - \lambda ) \beta ))} 
\\ \cdot
\prod_{\delta = \pm} \prod_{j=1}^{N} \frac{s(\delta x_{j} + \half \omega_{\nu} + \imag \beta / 2 - \imag \lambda \beta )}{s(\delta x_{j} + \half \omega_{\nu} + \imag \beta / 2)} \prod_{k=1}^{\tN} \frac{s(\delta \tx_{k} + \half \omega_{\nu} - \imag \lambda \beta / 2 + \imag \beta )}{s(\delta \tx_{k} + \half \omega_{\nu} - \imag \lambda \beta / 2)},
\end{multline}
and the constant term $c^{0}$ in \eqref{eq_GS_constant}.

The source identity \eqref{eq_Source_identity} then yields the following eigenvalue equation for the CFSV type generalization of the Koornwinder-van Diejen type operators:
\begin{corollary}\label{cor_deformed_vD_GS}
The function 
\begin{equation}\label{eq_KF_deformed_non_deformed}
\Psi_{N, \tN}(\bs x , \bs\tx ; \bs g , \lambda , \beta ) = \frac{\Psi_{N}(\bs x ; \bs g , \lambda ,\beta ) \Psi_{\tN}(\bs \tx ; \bs{g'} , 1 / \lambda , \lambda \beta)}{\prod_{j=1}^{N} \prod_{\delta = \pm} \prod_{k=1}^{\tN} ( s( x_{j} + \delta \tx_{k} + \imag ( \lambda - 1) \beta / 2)s( x_{j} + \delta \tx_{k} - \imag (\lambda - 1) \beta / 2))^{1/2}},
\end{equation}
where $g'_{\nu} = (\lambda + 1 - 2 g_{\nu})/2\lambda$ $(\nu= 0 ,\ldots , 2 \rho +1 )$, satisfies the eigenvalue equation
$$
( H_{N,\tN}(\bs x , \bs \tx ; \bs g , \lambda , \beta) - \cE_{N,\tN}(\bs g , \lambda ,\beta)) \Psi_{N,\tN}(\bs x , \bs \tx ; \bs g , \lambda ,\beta ) = 0
$$
for
$$
\cE_{N,\tN}(\bs g , \lambda ,\beta) = c^{0} + \frac{\prod_{\nu=1}^{\rho}s(\half \omega_{\nu})^{2}}{4} s( \imag \beta [ 2 \lambda N - 2 \tN + \abs{\bs{g}} - \half ( \rho +1)(\lambda +1) ] - \abs{\bs\omega}) 
$$
with $c^{0}$ in \eqref{eq_GS_constant}, in the rational \emph{(I)}, trigonometric \emph{(II)}, and hyperbolic \emph{(III)} cases for all parameters, and for 
\begin{equation}
2 \lambda ( N - 1)  - 2 (\tN + 1) + \sum_{\nu=0}^{7} g_{\nu} = 0 \quad \text{\emph{(IV)}}
\label{eq_deformed_vD_GS_balancing_condition}
\end{equation}
in the elliptic \emph{(IV)} case.
\end{corollary}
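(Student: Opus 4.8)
The plan is to obtain Corollary~\ref{cor_deformed_vD_GS} as a direct specialization of the source identity in Theorem~\ref{thm_source_identity} under the substitution \eqref{eq_deformed_case}, for which the mass parameters $m_{J}$ take only the two values $1$ and $-1/\lambda$. The reduction of the operator $\cH_{\cN}(\bs X ; \bs m)$ to $H_{N,\tN}(\bs x , \bs\tx ; \bs g , \lambda ,\beta) - c^{0}$ is already recorded in the discussion preceding the statement (using $d(g,1) = g$ and $d(g,-1/\lambda) = g - (\lambda+1)/2$ from \eqref{eq_Sen_parameters}, the sign flip $s(\imag \lambda (-1/\lambda) \beta) = - s(\imag\beta)$, and the evaluations of $f_{\pm}$ in \eqref{eq_Sen_coeff_pair} at the mass pairings that occur), so the essential remaining task is to check that the eigenfunction $\Phi(\bs X ; \bs m)$ in \eqref{eq_Sen_function} collapses to the function $\Psi_{N,\tN}(\bs x , \bs\tx ; \bs g , \lambda ,\beta)$ in \eqref{eq_KF_deformed_non_deformed}, after which the eigenvalue and balancing condition can be read off directly.

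First I would evaluate the single-particle factors $\psi(X_{J} ; m_{J})$ from \eqref{eq_Sen_function_single}. For $m_{J} = 1$ this is immediately the single-particle factor of $\Psi_{N}$. For $m_{J} = -1/\lambda$ the shift period is $\beta/m_{J} = -\lambda\beta$, so I would apply the reflection $G(x ; -\alpha) = G(-x ; \alpha)$ from \eqref{eq_Gamma_minus_alpha} to rewrite every $G(\,\cdot\, ; -\lambda\beta)$ as $G(\,\cdot\, ; \lambda\beta)$; combined with $d(g,-1/\lambda) = g - (\lambda+1)/2$ this should reproduce exactly the single-particle factor of $\Psi_{\tN}(\bs\tx ; \bs g' , 1/\lambda , \lambda\beta)$ with $g'_{\nu} = (\lambda + 1 - 2 g_{\nu})/2\lambda$.

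Next I would treat the pair factors $\phi(\ve X_{J} + \ve' X_{K} ; m_{J} , m_{K})$ in \eqref{eq_Sen_function_pair}. The $(1,1)$ and $(-1/\lambda,-1/\lambda)$ pairings fall under the case $m'=m$ and give the pair factors of $\Psi_{N}$ and $\Psi_{\tN}$ respectively (the latter again via \eqref{eq_Gamma_minus_alpha}, after relabelling the sign pair $(\ve,\ve')\to(-\ve,-\ve')$ in the product over $\ve,\ve'=\pm$). The cross pairing $(1,-1/\lambda)$ falls under the case $m'=-1/\lambda m$, giving $\phi(x ; 1, -1/\lambda) = s(x - \imag(\lambda-1)\beta/2)^{-1/2}$; using that $s$ is odd to combine the four sign choices $(\ve,\ve')$ then reproduces the symmetric cross factor $\prod_{\delta=\pm}[s(x_{j} + \delta\tx_{k} + \imag(\lambda-1)\beta/2)\, s(x_{j} + \delta\tx_{k} - \imag(\lambda-1)\beta/2)]^{-1/2}$ appearing in the denominator of \eqref{eq_KF_deformed_non_deformed}.

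Finally, substituting $\abs{\bs m} = N - \tN/\lambda$, so that $2\lambda\abs{\bs m} = 2\lambda N - 2\tN$, into the eigenvalue in \eqref{eq_Source_identity} yields $\cE_{N,\tN}$ once $c^{0}$ is added back (recall $\cH_{\cN} = H_{N,\tN} - c^{0}$ so that the eigenvalue of $H_{N,\tN}$ exceeds that of $\cH_{\cN}$ by $c^{0}$), and the same substitution turns the balancing condition \eqref{eq_balancing_condition} into \eqref{eq_deformed_vD_GS_balancing_condition}. The hard part will be the bookkeeping in the eigenfunction reduction: correctly propagating the reflection $G(x;-\alpha) = G(-x;\alpha)$ through both the single-particle and pair factors at $m = -1/\lambda$ and verifying that it yields precisely the parameter replacement $\bs g \to \bs g'$, $\lambda \to 1/\lambda$, $\beta \to \lambda\beta$; everything else is a routine matching of arguments.
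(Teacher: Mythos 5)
Your proposal is correct and follows exactly the route the paper intends: Corollary~\ref{cor_deformed_vD_GS} is presented as the direct specialization of Theorem~\ref{thm_source_identity} under \eqref{eq_deformed_case}, and your verification of the collapse of $\Phi(\bs X;\bs m)$ to $\Psi_{N,\tN}$ (via \eqref{eq_Gamma_minus_alpha} for the $m=-1/\lambda$ factors and the oddness of $s$ for the cross pairings), together with the substitution $\abs{\bs m}=N-\tN/\lambda$ in the eigenvalue and balancing condition, is precisely the bookkeeping the paper leaves implicit.
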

Corollary~\ref{cor_deformed_vD_GS} provides the exact groundstate, with corresponding eigenvalue, for the deformed Koornwinder-van Diejen operators. Moreover, the groundstate can be used to construct an explicit weight function for the CFSV type generalization of \eqref{eq_van_Diejen_operator}, and its various limiting cases, obtained from \eqref{eq_lemma_1} using \eqref{eq_deformed_case}:
\begin{corollary}
The operator 
\begin{equation}
A_{N , \tN}(\bs x ,\bs \tx ; \bs g , \lambda ,\beta) = \sum_{\ve=\pm} s(\imag \lambda \beta) \sum_{j=1}^{N} V_{j}^{\ve}(\bs x , \bs\tx) \e^{-\ve \imag \beta \frac{\partial }{\partial x_{j}}} - s(\imag \beta) \sum_{k=1}^{\tN} \tilde{V}_{k}^{\ve}(\bs x , \bs\tx) \e^{+ \ve\imag \lambda \beta \frac{\partial}{\partial \tx_{k}}} + V^{0}(\bs x , \bs \tx)
\label{eq_deformed_operator_reduced}
\end{equation}
has the constant function $1$ as an eigenfunction, with corresponding eigenvalues $\cE_{N, \tN}$, for all parameters in the rational \emph{(I)}, trigonometric \emph{(II)}, hyperbolic \emph{(III)} cases, and for parameters satisfying \eqref{eq_deformed_vD_GS_balancing_condition} in the elliptic \emph{(IV)} case. Furthermore, the operator is formally anti-symmetric w.r.t. a weighted $L^{2}$ inner product with weight function given by
$
\Psi_{N,\tN}(\bs x , \bs\tx ; \bs g ,\lambda , \beta)\Psi_{N,\tN}( - \bs x , - \bs\tx ; \bs g ,\lambda , \beta).
$ 
\end{corollary}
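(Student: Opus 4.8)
The plan is to derive both assertions from the relation between $A_{N,\tN}$ and the symmetric operator $H_{N,\tN}$ that is implicit in the text preceding Corollary~\ref{cor_deformed_vD_GS}. First I would record the bookkeeping of the constant term: under the specialization \eqref{eq_deformed_case} one has $\cH_{\cN}=H_{N,\tN}-c^{0}$, while the reduced operator \eqref{eq_deformed_operator_reduced} is written with the full potential $V^{0}$ rather than $\cV^{0}=V^{0}-c^{0}$. Consequently $A_{N,\tN}$ is exactly the groundstate conjugate
\[
A_{N,\tN}(\bs x,\bs\tx;\bs g,\lambda,\beta)=\Psi_{N,\tN}^{-1}\circ H_{N,\tN}(\bs x,\bs\tx;\bs g,\lambda,\beta)\circ\Psi_{N,\tN},
\]
with $\Psi_{N,\tN}$ in \eqref{eq_KF_deformed_non_deformed}, this being the specialization of Lemma~\ref{lemma_Sen_conjugate_operator} to \eqref{eq_deformed_case}.

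For the eigenvalue claim I would apply $A_{N,\tN}$ to the constant function. Using the second form in Lemma~\ref{lemma_Sen_conjugate_operator}, every shift term enters through $\exp(-\ve\imag(\beta/m_{J})\partial_{X_{J}})-1$, which annihilates $1$; hence $\cA_{\cN}\cdot 1$ equals its constant term, and accounting for the constant $c^{0}$ by which $V^{0}$ exceeds $\cV^{0}$, together with $2\lambda\abs{\bs m}=2\lambda N-2\tN$, gives $A_{N,\tN}\cdot 1=\cE_{N,\tN}$. Equivalently this is immediate from the displayed conjugation identity and Corollary~\ref{cor_deformed_vD_GS}, since $A_{N,\tN}\cdot 1=\Psi_{N,\tN}^{-1}H_{N,\tN}\Psi_{N,\tN}=\cE_{N,\tN}$; the restriction to \eqref{eq_deformed_vD_GS_balancing_condition} in the elliptic case is inherited from that corollary.

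For the anti-symmetry I would first observe that, since the multiplication operator $V^{0}$ cannot be anti-symmetric with respect to a bilinear pairing, the relevant inner product must be the sesquilinear one $\langle f,g\rangle_{W}=\int W\,\overline{f}\,g$. The key structural fact is that the \emph{symmetric form} $H_{N,\tN}$ is formally anti-self-adjoint with respect to flat $L^{2}(\R^{N+\tN})$: each shift is flanked symmetrically by $(V_{j}^{\ve})^{1/2}$ and $(V_{j}^{-\ve})^{1/2}$ (resp.\ $(\tilde V_{k}^{\ve})^{1/2}$ and $(\tilde V_{k}^{-\ve})^{1/2}$), so the off-diagonal adjoint condition $\overline{c_{-\mu}(z+\mu)}=-c_{\mu}(z)$ reduces to a reflection identity for the coefficients, while the prefactors $s(\imag\lambda\beta)$ and $-s(\imag\beta)$ are purely imaginary (as $s$ is odd and real-analytic on $\R$), supplying the crucial sign, and $V^{0}$ is purely imaginary on the real locus. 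Granting this, the conjugation computation $\langle A_{N,\tN}f,g\rangle_{\abs{\Psi_{N,\tN}}^{2}}=\int\overline{H_{N,\tN}(\Psi_{N,\tN}f)}\,(\Psi_{N,\tN}g)=-\langle f,A_{N,\tN}g\rangle_{\abs{\Psi_{N,\tN}}^{2}}$ shows that $A_{N,\tN}=\Psi_{N,\tN}^{-1}H_{N,\tN}\Psi_{N,\tN}$ is formally anti-self-adjoint with respect to the weight $\abs{\Psi_{N,\tN}}^{2}$. Finally, the reflection property $\overline{\Psi_{N,\tN}(\bs x,\bs\tx)}=\Psi_{N,\tN}(-\bs x,-\bs\tx)$ for real arguments — which follows from $\overline{G(x;\alpha)}=G(-x;\alpha)$ and the oddness of $s$ — identifies $\abs{\Psi_{N,\tN}}^{2}$ with the stated weight $\Psi_{N,\tN}(\bs x,\bs\tx)\Psi_{N,\tN}(-\bs x,-\bs\tx)$.

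The main obstacle is the anti-self-adjointness of $H_{N,\tN}$ with respect to flat $L^{2}$, i.e.\ the reflection/reality identities $\overline{c_{-\mu}(z+\mu)}=-c_{\mu}(z)$ for the off-diagonal coefficients and $\overline{V^{0}}=-V^{0}$ on the real locus. These do not hold factor by factor; they emerge only after combining all the theta-quotients — the numerators, the denominators $s(\pm2x_{j})s(\pm2x_{j}-\imag\beta)$ and $s(\pm2\tx_{k})s(\pm2\tx_{k}+\imag\lambda\beta)$, and the $x$–$\tx$ cross couplings — using $\overline{s(z)}=s(\bar z)$, the oddness of $s$, and the quasi-periodicity \eqref{eq_quasi_periodicity}. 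This is the same computation underlying the known self-adjointness of the van Diejen symmetric form, adapted to the present normalization, the purely imaginary prefactors being exactly what turns self-adjointness into anti-self-adjointness; note that, unlike the eigenvalue claim, this reflection property of the coefficients does not require the elliptic balancing condition.
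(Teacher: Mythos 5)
Your proposal is correct and follows the route the paper intends (the corollary is stated there without an explicit proof): identify $A_{N,\tN}$ as the groundstate conjugate $\Psi_{N,\tN}^{-1}\circ H_{N,\tN}\circ\Psi_{N,\tN}$ by specializing Lemma~\ref{lemma_Sen_conjugate_operator} to \eqref{eq_deformed_case}, read off the eigenvalue of the constant function from the second form of that lemma together with the bookkeeping $\cV^{0}=V^{0}-c^{0}$ and $2\lambda\abs{\bs m}=2\lambda N-2\tN$, and transport the formal anti-self-adjointness of the symmetric form $H_{N,\tN}$ through the conjugation to get the weight $\Psi_{N,\tN}(\bs x,\bs\tx)\Psi_{N,\tN}(-\bs x,-\bs\tx)$. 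The only slip is the adjoint condition you quote, $\overline{c_{-\mu}(z+\mu)}=-c_{\mu}(z)$, which is the real-shift convention; for the purely imaginary shifts occurring here one has $(\e^{-\ve\imag\beta\partial_{x_{j}}})^{\ast}=\e^{-\ve\imag\beta\partial_{x_{j}}}$ formally, so the relevant condition is $\overline{c_{\mu}(z+\mu)}=-c_{\mu}(z)$ with each shift paired with itself, and this is exactly what the reflection identities $\overline{V_{j}^{\ve}(\bar z)}=V_{j}^{-\ve}(z)$ (valid for real parameters) combined with the purely imaginary prefactors $s(\imag\lambda\beta)$, $-s(\imag\beta)$ deliver.
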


The operator in \eqref{eq_deformed_operator_reduced}, with different normalization, for the rational (I) case originally appeared in the work of Sergeev and Veselov \cite{SV09a}. The hyperbolic version of this operator is due to Feigin and Silantyev \cite{FS14} who also constructed a commuting family of higher order analytic difference operators.

We note that the deformed Koornwinder-van Diejen operators $H_{N,\tN}$, and by extension $A_{N,\tN}$, satisfy
\begin{equation}
\label{eq_deformed_van_Diejen_operator_anti_symmetry}
H_{N, \tN}(\bs x ,\bs \tx ; - \bs g , \lambda , - \beta ) = - H_{N , \tN}(\bs x , \bs \tx ; \bs g , \lambda , \beta)
\end{equation}
and is invariant under the transformation
$$
( N , \tN , \{ g_{\nu}\} , \lambda , \beta) \to ( \tN , N , \{ (2 g_{\nu} - \lambda -1 )/2\lambda \}, 1/\lambda , - \lambda \beta)
$$
and changing the variables correspondingly. Note that anti-symmetry property \eqref{eq_deformed_van_Diejen_operator_anti_symmetry} is due to our normalization, thus changing the normalization allows us to construct an operator which shares the same symmetry as the Koornwinder-van Diejen operators.

Finally, we conclude this section with a kernel function identity for the deformed generalizations of the Koornwinder-van Diejen operators. 
In the special case where $\cN = N + \tN + M + \tM$ ($N,\tN,M,\tM\in\Z_{\gtr0}$) and $\bs X$, $\bs m$ as in \eqref{eq_deformed_KF_case}, we obtain a kernel function for two pairs of deformed Koornwinder-van Diejen type difference operators:
\begin{corollary}\label{cor_deformed_KFI}
The function
\begin{multline}
F_{N,\tN,M,\tM}(\bs x , \bs \tx , \bs y , \bs \ty) = \Psi_{N,\tN}(\bs x , \bs \tx ; \bs g , \lambda , \beta ) \Psi_{M,\tM}(\bs y , \bs \ty ; \bs{\tilde{g}} , \lambda ,\beta )
 \Bigl( \prod_{j=1}^{N}\prod_{k=1}^{M} \prod_{\ve , \ve' = \pm} G( \ve x_{j} + \ve' y_{k} - \imag \lambda \beta / 2 ; \beta ) \Bigr)
\\ \cdot 
 \Bigl( \prod_{j=1}^{N}\prod_{k'=1}^{\tM} \prod_{\delta = \pm} s(x_{j} + \delta \ty_{k'}) \Bigr) 
\Bigl( \prod_{j'=1}^{\tN} \prod_{k=1}^{M} \prod_{\delta = \pm} s(\tx_{j'} + \delta y_{k'})\Bigr) \Bigl( \prod_{j'=1}^{\tN} \prod_{k'=1}^{\tM} \prod_{\ve , \ve' = \pm} G( \ve \tx_{j'} + \ve' \ty_{k'} - \imag \beta / 2 ; \lambda \beta ) \Bigr),
\end{multline}
where $\tilde{g}_{\nu} = \half(\lambda + 1) - g_{\nu}$ $(\nu=0,\ldots,2\rho+1)$, satisfies 
$$
\bigl( H_{N,\tN}(\bs x , \bs \tx ; \bs g , \lambda , \beta ) - H_{M , \tM}(\bs y , \bs \ty ; \bs{\tilde{g}}, \lambda ,\beta) - C_{N,\tN,M,\tM}\bigr) F_{N,\tN,M,\tM}(\bs x , \bs \tx , \bs y , \bs \ty) =0 
$$
with
$$
C_{N,\tN,M,\tM} = \frac{\prod_{\nu=1}^{\rho} s(\half \omega_{\nu})^{2}}{4} s( \imag \beta [ 2 \lambda ( N - M) - 2 (\tN - \tM) + \abs{\bs g} - \half ( \rho + 1) ( \lambda + 1) ] - \abs{\bs\omega}),
$$
for arbitrary parameters in the rational \emph{(I)}, trigonometric \emph{(II)}, and hyperbolic \emph{(III)} cases, and for
$$
2 \lambda ( N - M -1 ) - 2 ( \tN - \tM + 1) + \sum_{\nu=0}^{7} g_{\nu} = 0 \quad \text{\emph{(IV)}}
$$
in the elliptic \emph{(IV)} case.
\end{corollary}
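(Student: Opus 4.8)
The plan is to derive this identity as the specialization of the source identity in Theorem~\ref{thm_source_identity} to $\cN = N + \tN + M + \tM$ under the mass--variable assignment \eqref{eq_deformed_KF_case}, following the template of the proofs of Corollaries~\ref{cor_vD_Cauchy_KFI} and \ref{cor_deformed_vD_GS}. The masses split into two blocks: block $A$ with $m_{J}\in\{1,-1/\lambda\}$ (the variables $\bs x,\bs\tx$) and block $B$ with $m_{J}\in\{-1,+1/\lambda\}$ (the variables $\bs y,\bs\ty$). Since block $A$ is precisely the assignment \eqref{eq_deformed_case}, the restriction of $\cH_{\cN}$ to the $\bs x,\bs\tx$-shifts and the within-$A$ factors of $\Phi$ are handled verbatim by Corollary~\ref{cor_deformed_vD_GS}; block $B$ is treated by the same computation together with the relabelling $\tilde g_{\nu}=\half(\lambda+1)-g_{\nu}$ and the anti-symmetry \eqref{eq_deformed_van_Diejen_operator_anti_symmetry}.

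First I would verify the factorization of the eigenfunction \eqref{eq_Sen_function}. The within-block products give $\Psi_{N,\tN}(\bs x,\bs\tx;\bs g,\lambda,\beta)$ and $\Psi_{M,\tM}(\bs y,\bs\ty;\bs{\tilde{g}},\lambda,\beta)$ exactly as in Corollary~\ref{cor_deformed_vD_GS}. The cross-block pair factors $\phi(\ve X_{J}+\ve' X_{K};m_{J},m_{K})$ with $J$ in $A$ and $K$ in $B$ are read off from \eqref{eq_Sen_function_pair} for the four mass pairs $(1,-1)$, $(1,+1/\lambda)$, $(-1/\lambda,-1)$, $(-1/\lambda,+1/\lambda)$. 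The pairs $(1,-1)$ and $(-1/\lambda,+1/\lambda)$ fall under the case $m'=-m$ and produce the Gamma-function factors $G(\ve x_{j}+\ve' y_{k}-\imag\lambda\beta/2;\beta)$ and $G(\ve\tx_{j'}+\ve'\ty_{k'}-\imag\beta/2;\lambda\beta)$, the latter after invoking \eqref{eq_Gamma_minus_alpha}; the pairs $(1,+1/\lambda)$ and $(-1/\lambda,-1)$ fall under $m'=+1/\lambda m$, giving $s(\cdot)^{1/2}$ factors which, using the oddness of $s$, collapse (up to an overall constant absorbed as a quasi-constant) to the products $\prod_{\delta=\pm}s(x_{j}+\delta\ty_{k'})$ and $\prod_{\delta=\pm}s(\tx_{j'}+\delta y_{k})$. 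This reproduces $F_{N,\tN,M,\tM}$.

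Next I would establish the operator decoupling. Acting with $\cH_{\cN}$ on $\Phi$ as in \eqref{eq_proof_eigenvalue_identity} and passing to the conjugated form of Lemma~\ref{lemma_Sen_conjugate_operator}, the crucial point is that every shift coefficient coupling the two blocks drops out: for each of the four cross mass pairs one has $m'=-m$ or $m'=+1/\lambda m$, so the telescoping relation $f_{\pm}(x;m,m')f_{\mp}(x\mp\imag\beta/m;m,m')=1$ (already used in the proof of Corollary~\ref{cor_vD_Cauchy_KFI}) removes the coupling. The $A$-block then reproduces $H_{N,\tN}(\bs x,\bs\tx;\bs g,\lambda,\beta)$, while the $B$-block reduces to $H_{M,\tM}(\bs y,\bs\ty;-\bs{\tilde{g}},\lambda,-\beta)$, the reflected arguments $(-\bs{\tilde{g}},-\beta)$ arising from $d(g_{\nu},m)$ in \eqref{eq_Sen_parameters} together with the oddness and quasi-periodicity \eqref{eq_quasi_periodicity} of $s$ (the analogue of the exponential factors extracted in the proof of Corollary~\ref{cor_vD_Cauchy_KFI}). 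Applying the anti-symmetry \eqref{eq_deformed_van_Diejen_operator_anti_symmetry} converts the $B$-block into $-H_{M,\tM}(\bs y,\bs\ty;\bs{\tilde{g}},\lambda,\beta)$, which supplies the minus sign in the stated identity.

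Finally, the constant $C_{N,\tN,M,\tM}$ comes from the eigenvalue on the right of \eqref{eq_Source_identity}: with $\abs{\bs m}=(N-M)-(\tN-\tM)/\lambda$ one has $2\lambda\abs{\bs m}=2\lambda(N-M)-2(\tN-\tM)$, and the $\bs X$-dependent part of $\cV^{0}$ in \eqref{eq_Sen_coeff_potential} splits into the $V^{0}$ potentials of the two deformed operators, leaving precisely the claimed constant; the elliptic balancing condition is \eqref{eq_balancing_condition} rewritten under this value of $\abs{\bs m}$, namely $2\lambda(N-M-1)-2(\tN-\tM+1)+\sum_{\nu=0}^{7}g_{\nu}=0$. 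I expect the main obstacle to be the bookkeeping in the $B$-block: tracking the $d(g,m)$ shifts and the quasi-periodicity exponentials so that the coupling parameters reorganize exactly into $\bs{\tilde{g}}$ and the reflected arguments $(-\bs{\tilde{g}},-\beta)$ assemble correctly before the sign flip via \eqref{eq_deformed_van_Diejen_operator_anti_symmetry}; by contrast, the eigenfunction factorization and the evaluation of the constant are routine once the cross-block terms have been identified.
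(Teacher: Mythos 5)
Your proposal is correct and follows essentially the same route as the paper: the corollary is obtained by specializing the source identity of Theorem~\ref{thm_source_identity} to the assignment \eqref{eq_deformed_KF_case}, with the eigenfunction factorization read off from \eqref{eq_Sen_function_pair}, the cross-block shift coefficients removed via the telescoping identity for $f_{\pm}$, the $B$-block reorganized into $H_{M,\tM}(\bs y,\bs\ty;-\bs{\tilde g},\lambda,-\beta)$ and sign-flipped by \eqref{eq_deformed_van_Diejen_operator_anti_symmetry}, exactly as in the paper's explicit proof of Corollary~\ref{cor_vD_Cauchy_KFI}. Your bookkeeping of $\abs{\bs m}=(N-M)-(\tN-\tM)/\lambda$, the resulting constant, and the elliptic balancing condition all check out.
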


Corollary~\ref{cor_deformed_KFI} is the most general result and all other kernel function identities stated in this section can be obtained as special cases by setting the different variable numbers $N$,$\tN$,$M$, or $\tM$ to zero.

\section{Final remarks}\label{section_final}

In this paper we constructed a Chalykh-Feigin-Sergeev-Veselov type generalization of van Diejen's analytic difference operator and obtained kernel function identities for this operator and its various limiting cases. Using the kernel functions for the deformed Koornwinder-van Diejen type operators it is possible to construct eigenfunctions and eigenvalues of these operators using methods developed in, for example, \cite{Mim01,KNS09,HL10}. In particular, we believe that it is possible to construct generalizations of the Koornwinder polynomials, similar to the generalizations of the Macdonald polynomials in \cite{SV09b}, and extending the results in \cite{SV09a} to the $q$-difference case. In this Section, we outline general properties for these eigenfunctions, obtained from the analytic difference operators and kernel functions. However, a systematical study of the subject is outside the scope of this paper and is left to future work.

Here, we are only considering the trigonometric (II) case and let $z_{j} = \exp(2\imag r x_{j})$ for $j=1,\ldots,N$ and $w_{k} = \exp(2 \imag r \tx_{k})$ for $k=1,\ldots,\tN$. Before proceeding, let us recall that the Weyl group $W_{N} = \Z_{2}^{N} \rtimes \mathfrak{S}_{N}$ of type $BC$ acts naturally on the algebra of Laurent polynomials in $N$ variables through permutation and inversion of the variables. We find that the operator $A_{N,\tN}$ preserves the algebra of Laurent polynomials $p(\bs z , \bs w)$, in the $N+\tN$ variables $\bs z = (z_{1},\ldots,z_{N})$ and $\bs w = (w_{1},\ldots,w_{\tN})$, that are $W_{N}$-invariant in the $\bs z$-variables, $W_{\tN}$-invariant in the $\bs w$-variables, and satisfies the quasi-invariance condition
$$
\Bigl.\bigl(\exp( \imag \half \beta [\frac{\partial}{\partial x_{j}} + \lambda \frac{\partial}{\partial \tx_{k}} ])-\exp( -\imag \half \beta [\frac{\partial}{\partial x_{j}} + \lambda \frac{\partial}{\partial \tx_{k}} ])\bigr) p\Bigr\rvert_{x_{j} = \tx_{k}} = 0
$$
on each hyperplane $x_{j}=\tx_{k}$, for all $j=1,\ldots, N$ and $k=1,\ldots,\tN$. The reader can easily convince themselves that the function obtained when $A_{N,\tN}$ acts on any Laurent polynomial $p$, as described above, only has poles at $z_{j}=0$ for all $j=1,\ldots, N$ and $w_{k} = 0$ for all $k=1,\ldots,\tN$. Furthermore,  it is readily observed from the kernel function that this algebra should be generated by the deformed power sums
$$
\sum_{j=1}^{N} (\e^{2\imag r n x_{j}} + \e^{-2 \imag r n x_{j}}) + \e^{- r n ( \lambda - 1 ) \beta } \frac{1 - \e^{-2r n \beta}}{1- \e^{-2 r  n \lambda \beta}} \sum_{k=1}^{\tN}( \e^{2 \imag r n \tx_{k}} + \e^{-2 \imag r n \tx_{k}} )
$$
for all $n\in\Z_{\gtr0}$.

\section*{Acknowledgments}
We would like to thank E.~Langmann for suggesting this project, and M.~Noumi and S.~ N.~M.~Ruijsenaars for helpful discussions. This work has been supported by the Japan Society for the Promotion of Science and the author is an JSPS International Research Fellow.

\end{document}